\documentclass[12pt, letterpaper]{article}
\usepackage[utf8]{inputenc}
\usepackage[margin=1in]{geometry}

\usepackage{comment}

\usepackage{graphicx}
\usepackage{subcaption}

\usepackage{mathtools}
\usepackage{amsthm}
\usepackage{bm}

\usepackage[usenames]{xcolor}

\newtheorem{theorem}{Theorem}
\newtheorem{lemma}{Lemma}

\theoremstyle{definition}

\usepackage{enumitem}

\usepackage{algorithm,algcompatible}

\usepackage[onehalfspacing]{setspace}

\usepackage[round]{natbib}

\newcommand{\dual}{\textsc{D}}
\newcommand{\alg}{\textsc{A}}

\newcommand{\type}{\tau}

\newcommand{\chosen}{\textrm{\em matched}}
\newcommand{\notchosen}{\textrm{\em unmatched}}
\newcommand{\nonadapt}{\textrm{\em unknown}}

\newcommand{\sender}{\textrm{\em sender}}
\newcommand{\receiver}{\textrm{\em receiver}}

\newcommand{\kmax}{k_{\max}}

\newcommand{\exante}{\text{\em ex-ante}}
\newcommand{\expost}{\text{\em ex-post}}

\title{Understanding Zadimoghaddam's Edge-weighted Online Matching Algorithm: Weighted Case}

\author{
Zhiyi Huang%
\thanks{The University of Hong Kong. Email: zhiyi@cs.hku.hk}
}

\date{Last Update: October 2019}

\begin{document}

\maketitle

\begin{abstract}
    This article presents a simplification of Zadimoghaddam's algorithm for the edge-weighted online bipartite matching problem, under the online primal dual framework.
    In doing so, we obtain an improved competitive ratio of $0.514$.
    We first combine the online correlated selection (OCS), an ingredient distilled from \citet{Zadimoghaddam/arXiv/2017} by \citet{HuangT/arXiv/2019}, and an interpretation of the edge-weighted online bipartite matching problem by \citet{DevanurHKMY/TEAC/2016} which we will refer to as the complementary cumulative distribution function (CCDF) viewpoint, to derive an online primal dual algorithm that is $0.505$-competitive.
    Then, we design an improved OCS which gives the $0.514$ ratio.
\end{abstract}

\section{Introduction}
\label{sec:intro}

We consider the edge-weighted online bipartite matching problem.
Let there be a bipartite graph $G = (L, R, E)$, where $L$ and $R$ are the sets of vertices on the left-hand-side (LHS) and right-hand-side (RHS) respectively, and $E \subseteq L \times R$ is the set of edges.
Further, every edge $(i, j)$ is associated with a nonnegative weight $w_{ij} \ge 0$.
We will assume without loss of generality (wlog) that $E = L \times R$ by assigning zero weights to the missing edges.

The LHS is offline in that these vertices are given upfront.
The vertices on the RHS, however, arrive online one at a time.
On the arrival of an online vertex $j \in R$, its incident edges and their weights are revealed to the algorithm, who must then irrevocably match $j$ to an offline neighbor.
Each offline vertex may be matched any number of times, yet only the weight of the heaviest edge counts towards the objective.
This is equivalent to allowing a matched offline vertex $i$, say, to $j'$, to be rematched to a new online vertex $j$ whose edge weight $w_{ij}$ is larger than $w_{ij'}$, disposing $j'$ and edge $(i, j')$ for free. 
Therefore, it is also known as the \emph{free disposal} model.
We seek to maximize the total weight of the matching.

On the one hand, there is a simple $0.5$-competitive greedy algorithm, which matches each online vertex to the offline neighbor that gives the largest marginal increase in the objective, i.e., the weight of the new edge minus that of the old one, if any.
On the other hand, the problem generalizes the unweighted online bipartite matching problem by \citet{KarpVV/STOC/1990} and the vertex-weighted version by \citet{AggarwalGKM/SODA/2011} and therefore, no algorithm can get a competitive ratio better than the optimal $1 - \frac{1}{e}$ ratio of the two special cases.

\citet{FeldmanKMMP/WINE/2009} give an online algorithm with a competitive ratio strictly better than $0.5$, for a variant of the edge-weighted problem which counts the heaviest $n > 1$ edges matched to each offline vertex;
it can be interpreted as a fractional version of the problem.
The analysis is subsequently simplified by \citet{DevanurHKMY/TEAC/2016} under the online primal dual framework.

\citet{Zadimoghaddam/arXiv/2017} presents an algorithm for the original problem and gives a competitive ratio of $0.5018$.
%While the original analysis by \citet{Zadimoghaddam/arXiv/2017} is difficulty to 
In an effort to provide a more accessible algorithm and analysis, \citet{HuangT/arXiv/2019} distill a key ingredient from the algorithm of Zadimoghaddam, which they call the online correlatd selection (OCS), and explain how to use the OCS to obtain a simplified $0.505$-competitive algorithm in the unweighted case.

This article continues the work of \citet{HuangT/arXiv/2019}. 
We give a simplified algorithm for the edge-weighted online bipartite matching problem with an improved competitive ratio, under the online primal dual framework.
The main result is the following theorem.

\begin{theorem}
    \label{thm:main}
    There is a $0.514$-competitive online primal dual algorithm for the edge-weighted online bipartite matching problem, in the free-disposal model.
\end{theorem}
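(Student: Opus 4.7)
The plan is to combine, in a single online primal dual analysis, (i) the complementary cumulative distribution function (CCDF) interpretation of \citet{DevanurHKMY/TEAC/2016}, (ii) the online correlated selection (OCS) of \citet{HuangT/arXiv/2019}, and (iii) an improved OCS to be designed in this paper. I describe the four stages of the argument in the order I would carry them out.

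\textbf{Setup.} First I would write the natural LP relaxation of edge-weighted bipartite matching with free disposal and its dual, whose feasibility constraint reads $\alpha_i + \beta_j \ge w_{ij}$. The CCDF viewpoint rewrites $w_{ij} = \int_0^{w_{ij}} dy$ and splits $\alpha_i$ and $\beta_j$ into level-indexed parts $\alpha_i(y), \beta_j(y)$, so feasibility reduces to the pointwise statement that the expectation of $\alpha_i(y) + \beta_j(y)$ is at least $F$ for every $y \in [0, w_{ij})$, where $F = 0.514$ is the target ratio. This decomposition effectively reduces the weighted analysis to a continuum of unweighted instances indexed by the level $y$, which is exactly the regime the OCS is built for.

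\textbf{Algorithm.} For each arriving online vertex $j$ I would process every level $y \in [0, \max_i w_{ij})$ separately. Among offline neighbors with $w_{ij} > y$, identify the two vertices $i_1, i_2$ maximizing a marginal-gain function $g$ of their current fill level at $y$, feed the pair $(i_1, i_2)$ to the OCS, and use its output to decide which of them absorbs the infinitesimal mass $dy$. The primal and dual increments are defined so that reverse weak duality --- primal gain equals dual gain at every step --- holds by construction.

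\textbf{Approximate dual feasibility.} The analytic core is to verify, for every pair $(i,j)$ and every level $y < w_{ij}$, that the expectation of $\alpha_i(y) + \beta_j(y)$ is at least $F$. I would bound the expected value of $\alpha_i(y)$ by summing contributions over prior online rounds in which $i$ was one of the two OCS candidates at level $y$, using the OCS negative-correlation guarantee to control the probability that $i$ has remained unchosen up to that moment. The residual is absorbed by $\beta_j(y)$ contributed by $j$'s own step. Integrating over $y \in [0, w_{ij})$ yields approximate dual feasibility $\alpha_i + \beta_j \ge F \cdot w_{ij}$ in expectation, and combined with reverse weak duality this gives competitive ratio $F$.

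\textbf{Main obstacle.} The hard part is not the primal-dual bookkeeping, which is fairly standard once the CCDF reduction is in place, but rather the joint design of the improved OCS and the gain function $g$ that together push $F$ from $0.505$ (achievable with the OCS of \citet{HuangT/arXiv/2019}) up to $0.514$. The improved OCS must deliver sharper quantitative bounds on the probability that an offline vertex is never chosen after being a candidate in a given sequence of pairs; $g$ must be tuned to exploit exactly those bounds. The resulting factor-revealing optimization --- jointly over the OCS parameters, $g$, and the rule for splitting match mass between the two candidates --- determines the final competitive ratio, and verifying that it exceeds $0.514$ is the key technical task.
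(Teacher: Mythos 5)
Your high-level plan (CCDF decomposition, online primal dual, OCS as a black box, a sharper OCS to push past $0.505$) matches the paper's architecture, and your identification of the improved OCS plus parameter optimization as the crux is right. But there are two concrete gaps in the algorithm description that would sink the analysis as written.

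First, you only ever feed pairs to the OCS. The paper's algorithm crucially has \emph{three} kinds of rounds: \emph{randomized} (a pair $i_1,i_2$ is sent to the OCS), \emph{deterministic} (a single offline vertex $i^*$ is matched outright), and \emph{unmatched}. The deterministic option is not cosmetic: if the second-best candidate offers too little marginal dual gain, splitting the decision between two candidates is strictly worse than committing to the best one, and without the deterministic fallback there is a hard instance (one good neighbor plus a junk neighbor) on which a pairs-only strategy cannot reach $0.505$, let alone $0.514$. The paper encodes this tradeoff via a parameter $\kappa\in(1,2)$ comparing $\Delta^{D}_{i^*}\beta_j$ with $\Delta^{R}_{i_1}\beta_j+\Delta^{R}_{i_2}\beta_j$; a proof along your lines needs an analogous mechanism, and your dual-feasibility case analysis must cover all three round types (and, for a fixed $(i,j)$, whether $i$ is or is not among the chosen candidates), not just the randomized one.

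Second, your description says to process each level $y$ \emph{separately}: choose the maximizing pair $(i_1,i_2)$ at level $y$ and feed that pair to the OCS for the mass $dy$. That is the fractional (Feldman et al.) regime, not the integral one. Since the set of eligible neighbors and the maximizers change with $y$, this scheme would send a continuum of (level-dependent) pairs to the OCS per arrival, which is incompatible with the OCS interface (one pair per call, with guarantees phrased over a discrete sequence of rounds). In the paper, a \emph{single} pair (or single deterministic match) is chosen per online vertex, by maximizing an aggregate quantity $\Delta^{R}_i\beta_j$ that is already an integral over levels; the CCDF/level decomposition lives entirely in the \emph{analysis} (via a surrogate lower bound $\bar y_i(w)$ on the CCDF, since the true CCDF under OCS correlations is intractable). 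Your sketch conflates where the level decomposition is used. Fixing this --- move the per-level reasoning into the dual bookkeeping, keep the algorithm's decision per arrival integral --- is necessary before the OCS guarantee of Lemma~\ref{lem:OCS-gamma} can even be invoked.

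With those two corrections, the rest of your outline (negative-correlation bound on the never-chosen probability, gain-splitting parameters, factor-revealing LP, and an improved OCS achieving $\gamma=\tfrac{1}{3\sqrt{3}}$) does line up with the paper's proof.
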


The algorithm relies on a number of ingredients, including the aforementioned OCS by \citet{HuangT/arXiv/2019}, an interpretation of the edge-weighted online bipartite matching problem by \citet{DevanurHKMY/TEAC/2016}, which we will refer to as the \emph{complementary cumulative distribution function (CCDF) viewpoint}, and the online primal dual framework and its instantiation in the CCDF viewpoint.
The online primal dual framework has been extensively used in the design and analysis of online matching and more generally in online algorithms.
See, e.g., \citet{DevanurJK/SODA/2013} for a comprehensive discussion on this topic, and also \citet{BuchbinderJN/ESA/2007}, \citet{DevanurJ/STOC/2012}, \citet{DevanurHKMY/TEAC/2016}, \citet{HuangKTWZZ/STOC/2018}, \citet{HuangPTTWZ/SODA/2019}, \citet{HuangTWZ/ICALP/2018}, \citet{WangW/ICALP/2015} for a non-exhaustive list of applications of the online primal dual framework in online matching problems.

Section~\ref{sec:prelim} introduces the above technical preliminaries.
Section~\ref{sec:algorithm} demonstrates how to combine these ingredients to obtain an online primal dual algorithm which not only is simpler than the original algorithm by \citet{Zadimoghaddam/arXiv/2017}, but also achieves a better competitive ratio of $0.505$, matching the ratio in the unweighted case by \citet{HuangT/arXiv/2019}.
Finally, Section~\ref{sec:improved-OCS} presents an improved version of the OCS, which leads to the competitive ratio in the main theorem.

\section{Preliminaries}
\label{sec:prelim}

%This section introduces two ingredients to be used in the algorithm. 
%The first one is the \emph{online correlated selection (OCS)}, which is distilled from the algorithm of \citet{Zadimoghaddam/arXiv/2017} by \citet{HuangT/arXiv/2019}.
%The second one is an interpretation of the free-disposal model by \citet{DevanurHKMY/TEAC/2016}, which we will refer to as the \emph{complementary cumulative distribution function (CCDF) viewpoint}.
%Finally, we instantiate the online primal dual framework in the CCDF viewpoint.

\subsection{Online Correlated Selection}
\label{sec:OCS}

The OCS is an online algorithm that takes a sequence of pairs of candidates as input, one pair at a time. 
On receiving each pair, it immediately chooses one of the two candidates.
The goal is to ensure that (1) the marginal distribution of each round is a uniform distribution over the two candidates, and (2) for any fixed candidate, the randomness in the rounds in which it is involved is negatively correlated, such that after $k \ge 2$ rounds the probability that it is never chosen is strictly less than $2^{-k}$.

We include a formal definition of the OCS by \citet{HuangT/arXiv/2019} as Algorithm~\ref{alg:OCS} for completeness, and the formal statements of its properties in Lemma~\ref{lem:OCS} and Lemma~\ref{lem:OCS-generalized}.
Readers are referred to \citet{HuangT/arXiv/2019} for a detailed explanation of the algorithm and the proofs of Lemma~\ref{lem:OCS} and Lemma~\ref{lem:OCS-generalized}.

\begin{lemma}
    \label{lem:OCS}
    For any fixed sequence of pairs of candidates, any fixed candidate $i$, and any $k \ge 0$, the OCS in Algorithm~\ref{alg:OCS} ensures that with probability at least $1 - 2^{-k} \cdot f_k$, $i$ is chosen in at least one of the first $k$ rounds in which it is involved, where $f_k$ is defined recursively as: 
    \begin{equation}
        \label{eqn:OCS-recurrence}
        f_k = 
        \begin{cases}
            1 & k = 0, 1 \\
            f_{k-1} - \frac{1}{16} f_{k-2} & k \ge 2
        \end{cases}
        %f_0 = f_1 = 1 \quad,\qquad f_{k+1} = f_k - \frac{1}{16} f_{k-1}
    \end{equation}
\end{lemma}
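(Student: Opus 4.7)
The plan is to prove the bound $\Pr[i \text{ unchosen in its first } k \text{ rounds}] \leq 2^{-k} f_k$ by induction on $k$. The base cases $k=0$ and $k=1$ are immediate from $f_0 = f_1 = 1$: $k=0$ is vacuous, and $k=1$ follows from the uniform-marginal property of the OCS, which gives exactly $\tfrac{1}{2} = 2^{-1} f_1$.

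For the inductive step with $k \geq 2$, a naive single-round argument using only the marginal would yield $\Pr \leq \tfrac{1}{2} \cdot 2^{-(k-1)} f_{k-1}$, which overshoots the target $2^{-k}(f_{k-1} - \tfrac{1}{16} f_{k-2})$ by precisely $\tfrac{1}{16} \cdot 2^{-k} f_{k-2}$. To recover this slack, I would exploit the designed negative correlation between two consecutive $i$-rounds that the sender/receiver/dummy mechanism inside Algorithm~\ref{alg:OCS} provides; this is the whole purpose of that mechanism.

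Concretely, I would identify an event $C$, measurable from the OCS's internal randomness, under which the first two rounds involving $i$ are coupled so that $i$ cannot be missed in both (or more precisely, anti-correlated enough to produce a strict improvement over the independent case). On $C$, the analysis collapses to bounding the miss probability over the remaining $k-2$ rounds, which by induction is at most $2^{-(k-2)} f_{k-2}$. On $\bar C$, the first $i$-round is effectively independent of the remaining ones, so one pays a factor $\tfrac{1}{2}$ and invokes the hypothesis on the remaining $k-1$ rounds, obtaining $2^{-(k-1)} f_{k-1}$. Writing $q = \Pr[C]$, combining the two branches yields a bound of the form $q\cdot \alpha \cdot 2^{-(k-2)} f_{k-2} + (1-q)\cdot \tfrac{1}{2}\cdot 2^{-(k-1)} f_{k-1}$, and matching this against $2^{-k}(f_{k-1} - \tfrac{1}{16} f_{k-2})$ pins down the coupling parameters the OCS must realize.

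The main obstacle is that the inductive hypothesis must be applied \emph{conditionally} on $C$ or $\bar C$, and these events depend on internal OCS randomness that propagates into later rounds; naive conditioning could shift the distribution of the tail. I would address this by strengthening the statement to a worst-case bound that holds uniformly over the internal state of the OCS at the start of the relevant tail of rounds, so that the hypothesis remains valid on each branch of the conditioning. Once this strengthening is in place, closing the induction is a short algebraic check against the recurrence in~\eqref{eqn:OCS-recurrence}.
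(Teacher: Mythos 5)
The paper does not prove Lemma~\ref{lem:OCS} itself---it refers the reader to \citet{HuangT/arXiv/2019}---but the analogous argument for the improved OCS appears as Lemma~\ref{lem:improved-OCS-regular}, and the paper explicitly explains that the original OCS fits the same \emph{ex-ante}/\emph{ex-post} dependence-graph framework, with each node picking one of its up to four incident arcs with probability $\tfrac14$ and an arc realized only when both endpoints pick it. That argument is \emph{not} a two-branch split on a coupling event. It maintains a coupled pair of quantities: the analogue of $f_k$ together with an auxiliary quantity $a_k$ that additionally conditions on the internal state of the terminal round of the chain (whether it picked its $i$-out-arc); it derives two linear recurrences relating them, and eliminates $a_k$ to obtain $f_k = f_{k-1} - \tfrac{1}{16} f_{k-2}$.

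Your proposed single split does not produce that recurrence, and I do not think any achievable choice of $(q,\alpha)$ does. The natural coupling event $C$ is ``arc $(j_1,j_2)_i$ is realized in the \emph{ex-post} graph,'' i.e., round $j_1$ is an oblivious step whose coin marks $i$, and round $j_2$ is an adaptive step whose coin reads $i$'s mark. This has probability exactly $\tfrac{1}{16}$, and on it $i$ is chosen in exactly one of $j_1,j_2$, forcing $\alpha=0$. Your combined bound then becomes $\tfrac{15}{16}\cdot 2^{-k} f_{k-1}$, and comparing with the target $2^{-k}\bigl(f_{k-1}-\tfrac{1}{16}f_{k-2}\bigr)$ requires $f_{k-1}\ge f_{k-2}$, which fails for every $k\ge 3$ since $(f_k)$ is strictly decreasing beyond $k=1$. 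The underlying reason is that the recurrence decays at rate $\tfrac{2+\sqrt{3}}{4}\approx 0.9330$, strictly faster than the $\tfrac{15}{16}=0.9375$ that a one-event split of mass $\tfrac{1}{16}$ can possibly deliver. Your instinct to strengthen the induction so it survives conditioning is the right one---conditioning on $\bar{C}$ does distort what $j_2$ is permitted to do in the tail---but the concrete strengthening that closes both gaps is exactly the second tracked quantity $a_k$; without it, the conditioning issue you flagged and the arithmetic shortfall above both remain.
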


We say that $j_1 < j_2 < \dots < j_k$ is a sequence of consecutive rounds in which $i$ is involved if $i$ is a candidate in these rounds, but in no other rounds in between.

\begin{lemma}
    \label{lem:OCS-generalized}
    For any fixed sequence of pairs of candidates, any fixed candidate $i$, and any disjoint sequences of consecutive rounds of lengths $k_1, k_2, \dots, k_m \ge 1$ in which $i$ is involved, the OCS in Algorithm~\ref{alg:OCS} ensures that $i$ is chosen in at least one of the rounds with probability at least:
    \[
        1 - \prod_{\ell=1}^m 2^{-k_\ell} \cdot f_{k_\ell} 
        ~.
    \]
\end{lemma}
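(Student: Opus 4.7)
My plan is to reduce Lemma~\ref{lem:OCS-generalized} to Lemma~\ref{lem:OCS} by induction on the number of sequences $m$. The base case $m = 1$ is Lemma~\ref{lem:OCS} itself. For the inductive step, I would condition on the entire history $\mathcal{H}$ of the OCS up to (but not including) the first round of the last sequence $S_m$. The events ``$i$ not chosen in $S_\ell$'' for $\ell < m$ are measurable with respect to $\mathcal{H}$, so by the tower law it suffices to prove, for every such history,
\[
\Pr\bigl[i \text{ not chosen in } S_m \bigm| \mathcal{H}\bigr] \;\le\; 2^{-k_m}\, f_{k_m}.
\]
Combining this with the inductive hypothesis applied to $S_1, \ldots, S_{m-1}$ then yields the desired product bound.

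To prove the conditional bound, I would revisit the derivation of $2^{-k} f_k$ in Lemma~\ref{lem:OCS}. That proof uses the recurrence $f_k = f_{k-1} - \tfrac{1}{16} f_{k-2}$, which tracks the local effect of Algorithm~\ref{alg:OCS}'s linking mechanism on a pair of consecutive rounds involving $i$. Because each pair's linking uses independent fresh randomness, the recurrence argument should extend essentially verbatim to the conditional setting, provided that conditioning on $\mathcal{H}$ does not increase the probability that $i$ is missed in each round of $S_m$.

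The main obstacle is the case in which $S_{m-1}$ and $S_m$ are adjacent in the OCS chain, with no round involving $i$ lying strictly between them that is outside both sequences. Then the last round of $S_{m-1}$ and the first round of $S_m$ may be directly linked by the OCS, so conditioning on ``$i$ not chosen in $S_{m-1}$'' can alter the distribution of the first round of $S_m$. The intuition is that this alteration is actually helpful --- the OCS's negative correlation should make $i$ \emph{more} likely to be chosen in the first round of $S_m$ given that it was missed at the end of $S_{m-1}$ --- but confirming this requires a careful case analysis on the linking outcomes produced by Algorithm~\ref{alg:OCS}. A cleaner alternative would be to establish a negative association property for the indicator random variables $\mathbf{1}[i \text{ chosen in round } j]$ induced by the OCS, and then deduce the product bound directly by combining negative association with Lemma~\ref{lem:OCS} applied to each sequence individually; formalizing this negative association is, however, of comparable difficulty to the direct case analysis.
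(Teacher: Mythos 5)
The paper itself does not prove Lemma~\ref{lem:OCS-generalized}: both it and Lemma~\ref{lem:OCS} are cited from \citet{HuangT/arXiv/2019}, so there is no in-paper proof to compare against. Evaluating your proposal on its own terms, it has a genuine gap.

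The conditional bound you need, $\Pr\bigl[i \text{ not chosen in } S_m \mid \mathcal{H}\bigr] \le 2^{-k_m} f_{k_m}$, is false as a uniform statement over histories $\mathcal{H}$, even after restricting to histories consistent with ``$i$ not chosen in $S_1,\dots,S_{m-1}$''. Take $k_m = 1$, so the first (and only) round of $S_m$ has candidates $i$ and some $i'$, and consider a history under which the stored state is $\type_i = \nonadapt$ (which is the state after, say, any adaptive step involving $i$ and is entirely compatible with $i$ having been missed earlier) and $\type_{i'} = \notchosen$. In the oblivious branch $i$ is missed with probability $\tfrac12$; in the adaptive branch, with probability $\tfrac12$ the algorithm inspects $\type_{i'} = \notchosen$ and deterministically returns $i'$, and with probability $\tfrac12$ it inspects $\type_i = \nonadapt$ and flips a fresh coin. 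Overall $\Pr[i \text{ not chosen}] = \tfrac14 + \tfrac14 + \tfrac18 = \tfrac58 > \tfrac12 = 2^{-1} f_1$. If instead $\type_i = \chosen$ (possible when the last round involving $i$ before $S_m$ lies outside all the $S_\ell$'s, so it does not contradict the conditioning), the figure climbs to $\tfrac34$. So the tower-law step cannot be carried out with a pointwise conditional bound; the excess probability on these bad histories has to be balanced against good histories, which your outline never does. Also, your base case is stated backwards: Lemma~\ref{lem:OCS} is the \emph{special} case of $m=1$ in which the consecutive run is the first $k$ rounds of $i$; Lemma~\ref{lem:OCS-generalized} with $m=1$ already permits a run starting mid-stream and is therefore not literally Lemma~\ref{lem:OCS}.

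The actual argument in \citet{HuangT/arXiv/2019}, which this paper mirrors when analyzing the improved OCS in Section~\ref{sec:improved-OCS-analysis}, avoids conditioning on the algorithm's internal state. It re-parameterizes the randomness via ex-ante/ex-post dependence graphs in which the ex-post graph is a matching: two rounds sharing candidate $i$ are perfectly negatively correlated if they are ex-post matched and exactly independent otherwise (the analogue here is Lemma~\ref{lem:improved-OCS-CC-correlation}). The product over sequences and the recurrence for $f_k$ then drop out of bounding the probability that no useful arc is realized within each consecutive run, with independence across runs coming from the arc structure rather than from conditioning. Your negative-association intuition points in the right direction, but it is realized through this combinatorial decomposition, not through a history-conditioned induction, which, as shown above, does not yield the required pointwise inequality.
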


By the recurrence of $f_k$, we have $f_k \le f_{k-1}$, which further implies:
\[
    f_k = f_{k-1} - \tfrac{1}{16} f_{k-2} \le \big( 1 - \tfrac{1}{16} \big) f_{k-1}
    ~.
\]

Therefore, for any $k \ge 1$, we have:
\[
    f_k \le \big( 1 - \tfrac{1}{16} \big)^{k-1}
\]

As a corollary, we have the following lemma, which will be used in the analysis.

\begin{lemma}
    \label{lem:OCS-gamma}
    For any fixed sequence of pairs of candidates, any fixed candidate $i$, and any disjoint consecutive sequences of rounds of lengths $k_1, k_2, \dots, k_m \ge 1$ in which $i$ is involved, the OCS in Algorithm~\ref{alg:OCS} ensures that, for $\gamma = \tfrac{1}{16}$, $i$ is chosen in at least one of the rounds with probability at least:
    \[
        1 - \prod_{\ell=1}^m 2^{-k_\ell} \big( 1 - \gamma \big)^{k_\ell-1}
        ~.
    \]
\end{lemma}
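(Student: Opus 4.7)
The plan is to derive Lemma~\ref{lem:OCS-gamma} as an immediate corollary of Lemma~\ref{lem:OCS-generalized} combined with the closed-form upper bound $f_k \le (1-\gamma)^{k-1}$ for $\gamma = \tfrac{1}{16}$ that is already established in the two displayed inequalities immediately preceding the lemma. Conceptually, no new statement about the OCS itself needs to be proved; the proof is a one-line termwise replacement inside a product.

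Concretely, Lemma~\ref{lem:OCS-generalized} already guarantees that the probability that $i$ is chosen in at least one of the rounds is at least $1 - \prod_{\ell=1}^{m} 2^{-k_\ell} f_{k_\ell}$. The $f_k$ are nonnegative---a fact that follows by a short induction on the recurrence~\eqref{eqn:OCS-recurrence}, or alternatively from the closed form of this linear recurrence, whose characteristic roots $\tfrac{1}{2} \pm \tfrac{\sqrt{3}}{4}$ are both positive---and each $f_{k_\ell}$ satisfies $f_{k_\ell} \le (1-\gamma)^{k_\ell - 1}$ by the bound displayed just above the lemma. Hence
\[
\prod_{\ell=1}^{m} 2^{-k_\ell}\, f_{k_\ell} \;\le\; \prod_{\ell=1}^{m} 2^{-k_\ell}\, (1-\gamma)^{k_\ell - 1}.
\]
Subtracting both sides from $1$ reverses the inequality and yields exactly the bound claimed in Lemma~\ref{lem:OCS-gamma}.

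There is no real obstacle, since the nontrivial arithmetic work---the bound $f_k \le (1-\gamma)^{k-1}$---has already been carried out in the text via the two-step argument $f_k = f_{k-1} - \tfrac{1}{16} f_{k-2} \le f_{k-1} - \tfrac{1}{16} f_{k-1} = (1-\gamma) f_{k-1}$ (which uses the monotonicity $f_{k-2} \ge f_{k-1}$) followed by iteration from $f_1 = 1$. The only small detail worth making explicit is the nonnegativity of the $f_{k_\ell}$, which is required so that replacing each factor by a larger quantity in the product preserves the direction of the inequality; beyond that the lemma is essentially a restatement of Lemma~\ref{lem:OCS-generalized} in a form more convenient for the subsequent primal dual analysis.
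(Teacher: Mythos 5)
Your proposal matches the paper's own approach exactly: the paper itself presents Lemma~\ref{lem:OCS-gamma} as an immediate corollary of Lemma~\ref{lem:OCS-generalized} together with the bound $f_k \le (1-\tfrac{1}{16})^{k-1}$ derived in the two displayed inequalities preceding it. Your added remark on the nonnegativity of the $f_k$ (needed both for $f_k \le f_{k-1}$ to follow from the recurrence and for the termwise product comparison) is a correct and useful detail that the paper leaves implicit.
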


More generally, for any $0 \le \gamma \le 1$, we say that an online algorithm is a $\gamma$-OCS if it satisfies the above lemma for the corresponding value of $\gamma$.

\begin{algorithm}[t]
    \caption{Online Correlated Selection (OCS)}
    \label{alg:OCS}
    \begin{algorithmic}
        \medskip
        \STATEx \textbf{State variables:}
        \begin{itemize}
            \item $\type_i \in \big\{ \chosen, \notchosen, \nonadapt \big\}$ for each offline vertex $i \in L$; %\newline
            initially, let $\type_i = \nonadapt$.
        \end{itemize}
        \STATEx \textbf{On receiving $2$ candidate offline vertices $i_1$ and $i_2$ (for an online vertex $j \in R$):}
        \begin{enumerate}
            \item With probability $\frac{1}{2}$, let it be an \emph{oblivious step}:
            \begin{enumerate}
                \item Draw $\ell, m \in \{1, 2\}$ uniformly at random.
                \item Let $\type_{i_{-m}} = \nonadapt$.
                \item If $m = \ell$, let $\type_{i_m} = \chosen$; otherwise, let $\type_{i_m} = \notchosen$.
                %\item Let $\type_{i_\ell} = \chosen$, and $\type_{i_{-\ell}} = \notchosen$.
                %
                %\item Return $i_\ell$.
                %
            \end{enumerate}
            \item Otherwise (i.e., with probability $\frac{1}{2}$), let it be an \emph{adaptive step}:
            \begin{enumerate}
                \item Draw $m \in \{1, 2\}$ uniformly at random.
                \item If $\type_{i_m} = \chosen$, let $\ell = -m$; \newline
                if $\type_{i_m} = \notchosen$, let $\ell = m$; \newline
                if $\type_{i_m} = \nonadapt$, draw $\ell \in \{1, 2\}$ uniformly at random.
                \item Let $\type_{i_1} = \type_{i_2} = \nonadapt$.
            \end{enumerate}
            \item Return $i_\ell$.
        \end{enumerate}
        \smallskip
    \end{algorithmic}
\end{algorithm}

\subsection{CCDF Viewpoint}
\label{sec:histogram-view}

For any offline vertex $i \in L$, and any weight-level $w > 0$, let $y_i(w)$ be the probability that $i$ is matched to at least one online vertex $j$ whose $w_{ij} \ge w$.
Then, $y_i(w)$ is a non-increasing function of $w$ which takes values between $0$ and $1$ because it is exactly the complementary cumulative distribution function of the weight of the heaviest edge matched to $i$.
We remark that $y_i(w)$ is a step function with polynomially many pieces because the number of pieces is at most the number of different weights of its incident edges.
Hence, an online algorithm can maintain this information in polynomial time.

The expected weight of the heaviest edge matched to $i$ is then equal to the area under function $y_i(w)$, i.e.:
\begin{equation}
    \label{eqn:histogram-expected-weight}
    \int_0^\infty y_i(w) dw
    ~.
\end{equation}

See Figure~\ref{fig:histogram} for an example.
Suppose an offline vertex $i$ has $4$ online neighbors $j_1$, $j_2$, $j_3$, and $j_4$, with edge weights $w_1 < w_2 < w_3 < w_4$.
Further suppose that $j_1$ is matched to $i$ with certainty, while $j_2$, $j_3$, and $j_4$ each has some probability of being matched to $i$.
The latter events may or may not be correlated.
Then, we have:
\[
y_i(w) = \begin{cases}
1 & 0 < w \le w_1 \\
\Pr \big[ \text{at least one of $j_2$, $j_3$, and $j_4$ is matched to $i$} \big]
& w_1 < w \le w_2 \\
\Pr \big[ \text{at least one of $j_3$ and $j_4$ is matched to $i$} \big] & w_2 < w \le w_3 \\
\Pr \big[ \textrm{$j_4$ is matched to $i$} \big] & w_3 < w \le w_4 \\
0 & w > w_4
\end{cases}
\]

Next, suppose a new neighbor arrives whose edge weight is also $w_3$.
Then, the values of $y_i(w)$ may increase for any $w \le w_3$ according to matching decision by the algorithm.
See Figure~\ref{fig:histogram-increment} for an example.
The total area of the shaded regions is the increment in the expected weight of the heaviest edge matched to $i$.

\begin{figure}[t]
    \centering
    \begin{subfigure}{.45\textwidth}
    \includegraphics[width=\textwidth]{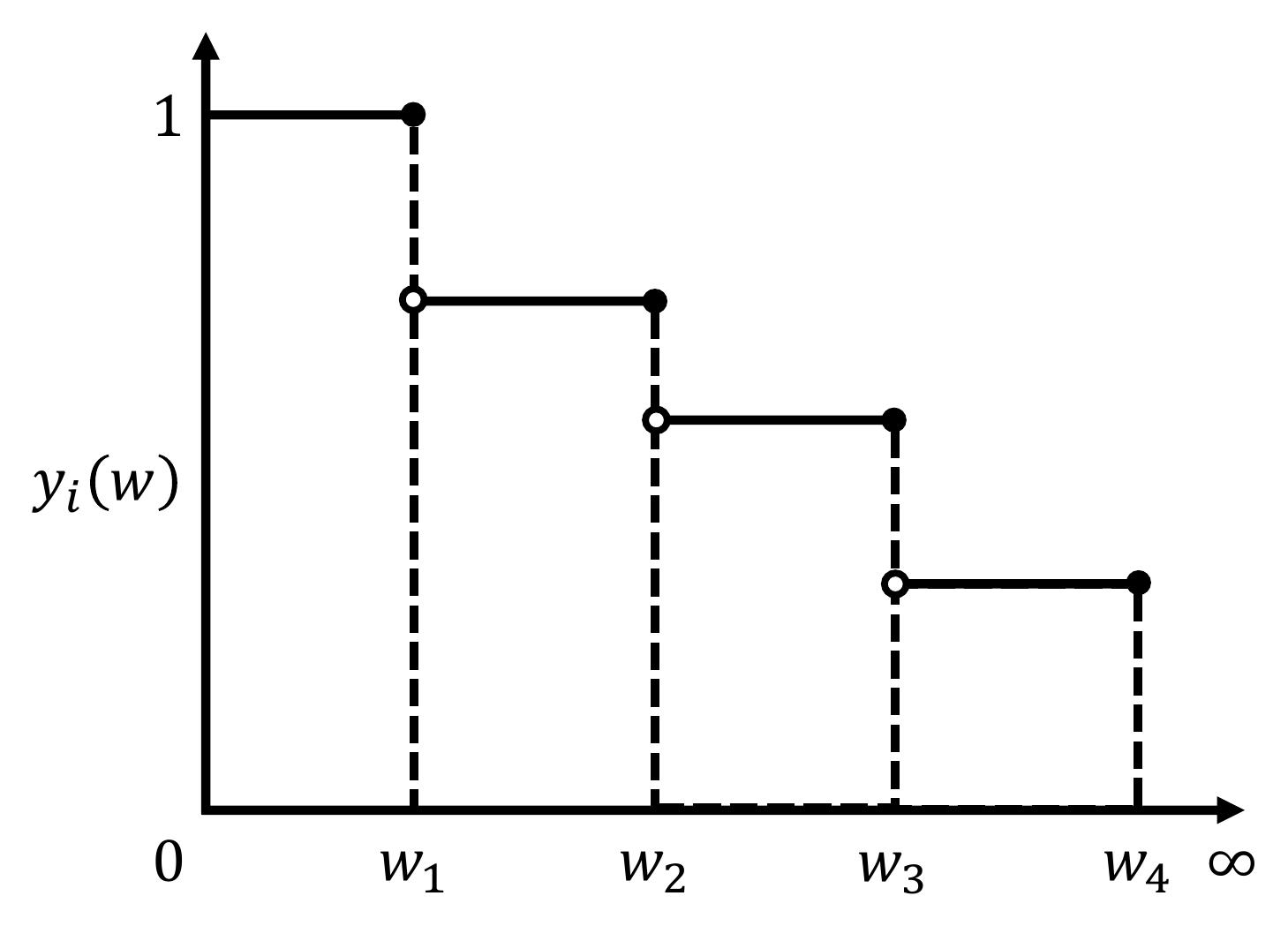}
    \caption{CCDF of a vertex $i$}
    \label{fig:histogram}
    \end{subfigure}
    \begin{subfigure}{.45\textwidth}
    \includegraphics[width=\textwidth]{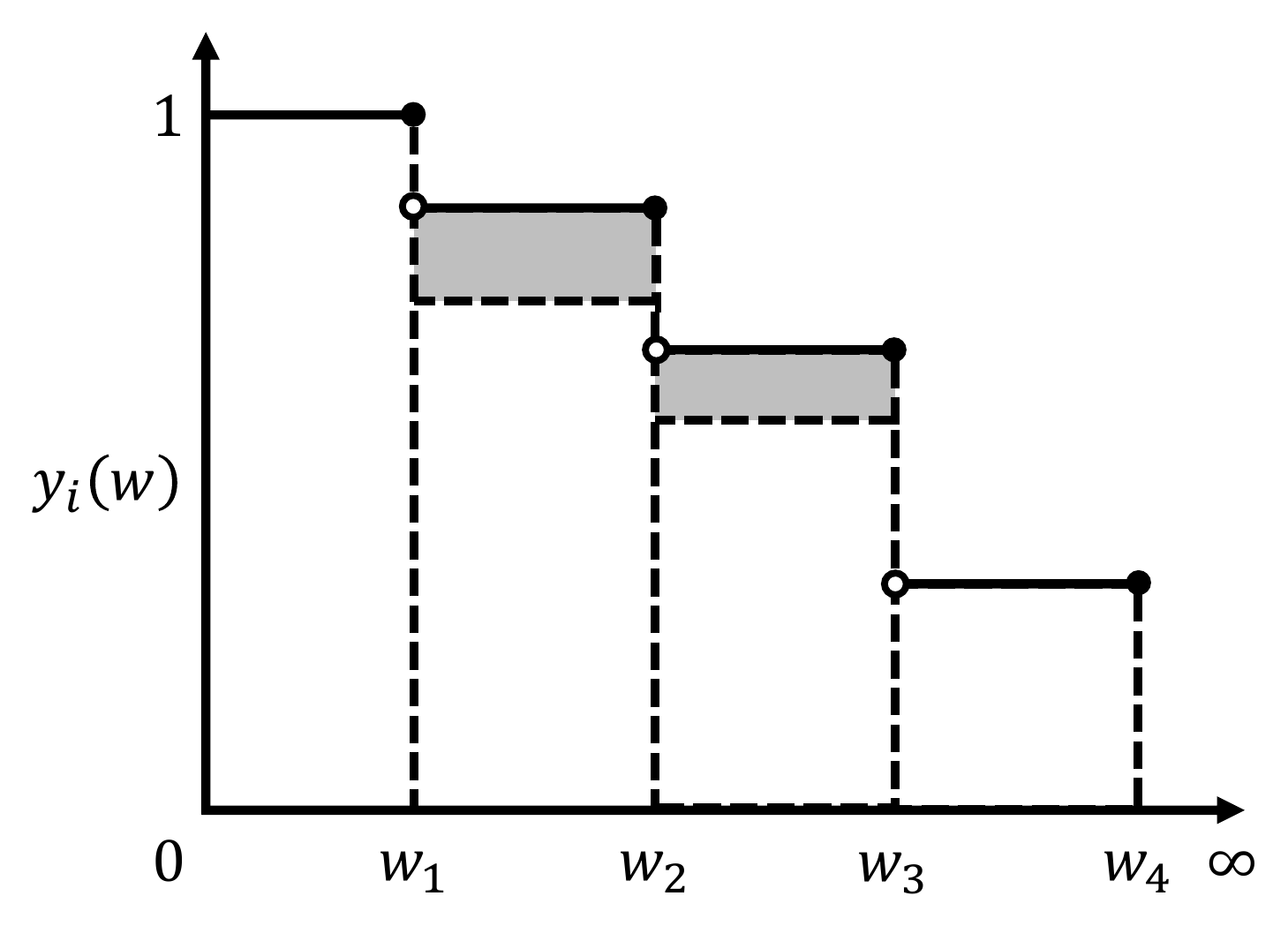}
    \caption{Updates in the CCDF viewpoint}
    \label{fig:histogram-increment}
    \end{subfigure}
    \caption{Complementary cumulative distribution function viewpoint}
\end{figure}

\subsection{Online Primal Dual}

Consider the following matching linear program (LP), in which $x_{ij}$ may be interpreted as the probability that $(i, j)$ is the heaviest edge matched to $i$:
\begin{align*}
    \textrm{maximize} \quad 
    & 
    \sum_{i \in L} \sum_{j \in R} w_{ij} x_{ij} \\
    \textrm{subject to} \quad
    &
    \sum_{j \in R} x_{ij} \le 1 && \forall i \in L \\
    &
    \sum_{i \in L} x_{ij} \le 1 && \forall j \in R \\
    &
    x_{ij} \ge 0 && \forall i \in L, \forall j \in R
\end{align*}

Further, consider its dual LP:
\begin{align*}
    \textrm{minimize} \quad
    &
    \sum_{i \in L} \alpha_i + \sum_{j \in R} \beta_j \\
    \textrm{subject to} \quad
    &
    \alpha_i + \beta_j \ge w_{ij} && \forall i \in L, \forall j \in R \\[2ex]
    &
    \alpha_i, \beta_j \ge 0 && \forall i \in L, \forall j \in R
\end{align*}

Let $\alg$ denote the expected total weight of the matching by the algorithm.
Let $\dual$ denote the dual objective.
The online primal dual framework mains not only a matching but also a dual assignment at all time subject to the conditions summarized below.

\clearpage

\begin{lemma}
    \label{lem:online-primal-dual}
    Suppose for some $0 < \Gamma \le 1$, the following conditions hold at all time:
    \begin{enumerate}
        \item (Objectives) $\alg \ge \dual$.
        \item (Approximate Dual Feasibility) For any $i \in L$ and any $j \in R$:
        \[
            \alpha_i + \beta_j \ge \Gamma \cdot w_{ij} ~.
        \]
    \end{enumerate}
    Then, the algorithm is $\Gamma$-competitive.
\end{lemma}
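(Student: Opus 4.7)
The plan is to derive the competitive ratio from the two given conditions via weak LP duality applied to the scaled dual. Intuitively, condition 2 says that $(\alpha_i/\Gamma, \beta_j/\Gamma)$ is a feasible solution to the dual LP, so its objective upper-bounds the offline optimum; then condition 1 chains this bound back to the algorithm.

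First, I would observe that the offline optimum $\textsc{Opt}$ of the edge-weighted matching problem (in the free-disposal model) is at most the optimum of the primal LP. The reason is that the offline optimum can be encoded as a feasible primal solution: set $x_{ij}=1$ when $(i,j)$ is the heaviest edge matched to $i$ in the optimum and $x_{ij}=0$ otherwise. This respects both families of constraints (each $i$ has at most one "heaviest" edge, and each $j$ is matched to at most one $i$) and achieves primal objective equal to $\textsc{Opt}$. Hence $\textsc{Opt} \le \textrm{LP}_{\primal}$.

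Next, I would invoke condition 2. Define $\tilde\alpha_i = \alpha_i/\Gamma$ and $\tilde\beta_j = \beta_j/\Gamma$. Nonnegativity is inherited, and condition 2 gives $\tilde\alpha_i + \tilde\beta_j \ge w_{ij}$, so $(\tilde\alpha,\tilde\beta)$ is dual-feasible. By weak LP duality,
\[
\textrm{LP}_{\primal} \;\le\; \sum_{i \in L} \tilde\alpha_i + \sum_{j \in R} \tilde\beta_j \;=\; \frac{1}{\Gamma}\,\dual .
\]
Combining with the previous paragraph yields $\Gamma \cdot \textsc{Opt} \le \dual$. Finally, condition 1 says $\alg \ge \dual$, so $\alg \ge \Gamma \cdot \textsc{Opt}$, which is precisely the claim that the algorithm is $\Gamma$-competitive. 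To conclude, I would note that the hypotheses are required to hold "at all time," so in particular they hold at the end of the input, which is the relevant moment for comparing $\alg$ to $\textsc{Opt}$.

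There is no real obstacle here; the argument is a textbook application of weak duality. The only mildly non-routine step is justifying $\textsc{Opt} \le \textrm{LP}_{\primal}$ in the free-disposal setting, but this is immediate from the observation above that the heaviest-edge indicator yields a feasible integral primal solution of value equal to the offline matching weight.
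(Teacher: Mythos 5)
Your proof is correct and follows essentially the same route as the paper: scale the duals by $1/\Gamma$, verify feasibility from condition 2, invoke weak duality, and chain with condition 1. The only difference is that you additionally spell out why the offline optimum is bounded by the primal LP value, a step the paper takes for granted.
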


\begin{proof}
    By the second condition, $\Gamma^{-1} \alpha_i$'s and $\Gamma^{-1} \beta_j$'s form a feasible dual assignment, whose objective is $\Gamma^{-1} \dual$.
    By weak duality of LP, the objective of any feasible dual is an upper bound of the optimal.
    That is, $\dual$ is at least a $\Gamma$ fraction of the optimal.
    Therefore, together with the first condition we get the stated competitive ratio.
\end{proof}

\paragraph{Online Primal Dual in the CCDF Viewpoint.}
In light of the CCDF viewpoint, for any offline vertex $i \in L$, and any weight-level $w > 0$, we introduce a new variable $\alpha_i(w)$; 
we maintain $\alpha_i$ via these new variables:
\begin{equation}
    \label{eqn:alpha-by-weight} 
    \alpha_i = \int_0^\infty \alpha_i(w) dw ~. 
\end{equation}

Therefore, we rephrase Lemma~\ref{lem:online-primal-dual} in the CCDF viewpoint as follows.

\begin{lemma}
    \label{lem:online-primal-dual-rephrase}
    Suppose for some $0 < \Gamma \le 1$, the following conditions hold at all time:
    \begin{enumerate}
        \item (Objectives) $\alg \ge \dual$.
        \item (Approximate Dual Feasibility) For any $i \in L$ and any $j \in R$:
        \[
            \int_0^{\infty} \alpha_i(w) dw + \beta_j \ge \Gamma \cdot w_{ij} ~.
        \]
    \end{enumerate}
    Then, the algorithm is $\Gamma$-competitive.
\end{lemma}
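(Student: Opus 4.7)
The plan is to reduce Lemma~\ref{lem:online-primal-dual-rephrase} to Lemma~\ref{lem:online-primal-dual} through the substitution prescribed by equation~(\ref{eqn:alpha-by-weight}). Concretely, I would define $\alpha_i := \int_0^\infty \alpha_i(w) dw$ for every $i \in L$, and leave the $\beta_j$'s unchanged. With this identification, the approximate dual feasibility condition in the rephrased lemma is literally the approximate dual feasibility condition of the original lemma, and the objective condition $\alg \ge \dual$ is also unchanged, since $\dual$ is computed from the same $\alpha_i$'s and $\beta_j$'s under this substitution.

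The only point requiring a brief remark is nonnegativity of the dual variables. Implicit in the CCDF viewpoint is that the algorithm maintains $\alpha_i(w) \ge 0$ pointwise (and $\beta_j \ge 0$), so $\alpha_i \ge 0$ follows immediately from the integral representation. Once both conditions of Lemma~\ref{lem:online-primal-dual} are verified for the induced $\alpha_i$'s, that lemma applies verbatim and yields the $\Gamma$-competitive guarantee.

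I do not anticipate any real obstacle: this is a reformulation rather than a new argument. Its purpose is conceptual---it lets the subsequent algorithm specify its dual updates level-by-level in $w$, which is the natural language for reasoning about the heaviest edge matched to each offline vertex in the CCDF viewpoint used by the rest of the paper.
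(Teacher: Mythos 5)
Your proposal is correct and matches the paper's approach: the paper introduces the substitution $\alpha_i = \int_0^\infty \alpha_i(w)\,dw$ in Eqn.~\eqref{eqn:alpha-by-weight} and then simply states the rephrased lemma as an immediate consequence of Lemma~\ref{lem:online-primal-dual}, which is precisely the reduction you spell out. Your additional remark on nonnegativity is a harmless and sensible clarification.
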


\section{Online Primal Dual Algorithm}
\label{sec:algorithm}

This section gives an online primal dual algorithm for the edge-weighted online bipartite matching problem, utilizing the OCS as the main ingredient.
Importantly, the algorithm and its analysis takes a general $\gamma$-OCS as a blackbox which satisfies Lemma~\ref{lem:OCS-gamma}.
The competitive ratio depends on the value of $\gamma$.  
With the $\tfrac{1}{16}$-OCS by \citet{HuangT/arXiv/2019}, we show a competitive ratio of $0.505$.
We will further combine them with an improved OCS to get the competitive ratio in Theorem~\ref{thm:main} in the next section.

\subsection{Algorithm}

At a high-level, the algorithm works as follows.
On the arrival of an online vertex $j \in R$, it makes one of the following three kinds of decisions.
The first one is \emph{randomized}: it picks a pair of offline vertices $i_1, i_2 \in L$, lets the OCS select one of them, and matches $j$ to it.
%We will say such a round is \emph{randomized}.
The second one is \emph{deterministic}: it picks one offline vertex $i^* \in L$, and matches $j$ to it.
Finally, it may also leave $j$ \emph{unmatched}.
%We will say such a round is \emph{deterministic}.

How does the algorithm decide whether it is a randomized, deterministic, or unmatched round?
How does it pick the candidate offline vertices?
On the arrival of an online vertex $j$, for every offline vertex $i$, the algorithm calculates how much $\beta_j$ would gain if $j$ is matched to $i$ in a deterministic round, denoted as $\Delta_i^D \beta_j$, and similar $\Delta_i^R \beta_j$ for a randomized round.
Then, it finds $i^*$ with the maximum $\Delta_i^D \beta_j$, and $i_1, i_2$ with the maximum $\Delta_i^R \beta_j$.
If both $\Delta_{i_1}^R \beta_j + \Delta_{i_2}^R \beta_j$ and $\Delta_{i^*}^D \beta_j$ are negative, leave $j$ unmatched. 
If $\Delta_{i_1}^R \beta_j + \Delta_{i_2}^R \beta_j$ is nonnegative and greater than $\Delta_{i^*}^D \beta_j$, match $j$ in a randomized round with $i_1$ and $i_2$ as the candidates.
Finally, if $\Delta_{i^*}^D \beta_j$ is nonnegative and greater than $\Delta_{i_1}^R \beta_j + \Delta_{i_2}^R \beta_j$, match $j$ to $i^*$ in a deterministic round.

It remains to explain how to calculate $\Delta_i^D \beta_j$ and $\Delta_i^R \beta_j$.
To this end, for any offline vertex $i \in L$ and any weight-level $w > 0$, let $k_i(w)$ be the number of randomized rounds in which $i$ is a candidate and the corresponding edge weight is at least $w$.
The increments in the dual variables are determined by the values of $k_i(w)$'s.

More concretely, the algorithm relies on the following parameters. 
Their values will be determined later in the analysis through solving an LP to optimize the competitive ratio.
\begin{itemize}
    \item $a(k)$ : Amortized increment in $\alpha_i(w)$ if $i$ is chosen as one of the two candidates in a randomized round in which its edge weight is at least $w$, and $k_i(w) = k$.
    \item $b(k)$ : Increment in $\beta_j$ due to an offline vertex $i$, at weight-level $w \le w_{ij}$, if $j$ is matched in a randomized round with $i$ as one of the two candidates, and $k_i(w) = k$.
\end{itemize}

We shall interpret $a(k)$'s and $b(k)$'s as a gain splitting rule as follows.
If $i$ is chosen as one of the two candidates to be matched to $j$ in a randomized round, the increase in the expected weight of the heaviest edge matched to $i$ is equal to the integration of $y_i(w)$'s increments, for $0 < w \le w_{ij}$, which can be related to the values of $k_i(w)$'s.
We will split the gain due to the increment of $y_i(w)$ into two parts, $a\big(k_i(w)\big)$ and $b\big(k_i(w)\big)$, such that the former goes to $\alpha_i(w)$ and the latter goes to $\beta_j$.

\begin{algorithm}[t]
    \caption{Online Primal Dual Edge-weighted Bipartite Matching}
    %\caption{}
    \label{alg:primal-dual}
    \begin{algorithmic}
        \medskip
        \STATEx \textbf{State variables:}
        \begin{itemize}%[noitemsep,nolistsep]
            \item $k_i(w) \ge 0$ : The number of randomized rounds in which $i$ is a candidate, and its edge weight is at least $w$;
            $k_i(w) = \infty$ if it has been chosen in a deterministic round in which its edge weight at least $w$.
        \end{itemize}
        \smallskip
        \STATEx \textbf{On the arrival of an online vertex $j \in R$:}
        \begin{enumerate}%[noitemsep,nolistsep]
            \item For any offline vertex $i \in L$, compute $\Delta_i^R \beta_j$ and $\Delta_i^D \beta_j$ according to Eqn.~\eqref{eqn:beta-increment-randomized} and \eqref{eqn:beta-increment-deterministic}.
            \item Find $i_1, i_2$ with the maximum $\Delta_i^R \beta_j$.
            \item Find $i^*$ with the maximum $\Delta_i^D \beta_j$.
            \item If $0 > \Delta_{i_1}^R \beta_j + \Delta_{i_2}^R \beta_j$ and $\Delta_{i^*}^D \beta_j$, leave $j$ unmatched.
            \hspace*{\fill}
            \textbf{(unmatched)}
            \item If $\Delta_{i_1}^R \beta_j + \Delta_{i_2}^R \beta_j \ge \Delta_{i^*}^D \beta_j$ and $0$, let the OCS pick one of $i_1$ and $i_2$.
            \hspace*{\fill}
            \textbf{(randomized)}
        \item If $\Delta_{i^*}^D \beta_j > \Delta_{i_1}^R \beta_j$ and $\ge 0$, match $j$ to $i^*$.
            \hspace*{\fill}
            \textbf{(deterministic)}
            \item Update $k_i(w)$'s accordingly.
        \end{enumerate}
    \end{algorithmic}
    \smallskip
\end{algorithm}

Indeed, we will show in the next subsection how to maintain the following invariant related to the dual variables $\alpha_i(w)$'s of the offline vertices:
\begin{equation}
    \label{eqn:invariant-alpha}
    \alpha_i(w) \ge \sum_{0 \le k < k_i(w)} a(k)
    ~.
\end{equation}

Further, define $\Delta^R_i \beta_j$ to be:
\begin{equation}
    \label{eqn:beta-increment-randomized}
    \Delta_i^R \beta_j = \int_0^{w_{ij}} b\big(k_i(w)\big) dw - \frac{1}{2} \int_{w_{ij}}^{\infty} \sum_{0 \le \ell < k_i(w)} a(\ell) dw 
    ~.
\end{equation}

The first term follows by the aforementioned interpretation of $b(k)$'s;
this would be the only term in the unweighted case.
The second term is designed to cancel out the extra help we get from $\alpha_i(w)$'s at weight-levels $w > w_{ij}$ in terms of satisfying approximate dual feasibility between $i$ and $j$.
Concretely, the choice of $b(k)$'s shall ensure approximate dual feasibility between $i$ and $j$, if $j$ is matched in a randomized round with two candidates at least as good as $i$, i.e.:
\[
    \int_0^{\infty} \alpha_i(w) dw + 2 \cdot \Delta_i^R \beta_j \ge \Gamma \cdot w_{ij}
    ~.
\]

By the lower bound of $\alpha_i(w)$'s in Eqn.~\eqref{eqn:invariant-alpha}, and the value of $\Delta_i^R \beta_j$ in Eqn.~\eqref{eqn:beta-increment-randomized}, it becomes an inequality independent of the weight-levels $w > w_{ij}$:
\[
    \int_0^{w_{ij}} \sum_{0 \le k < k_i(w)} a(k) dw + 2 \int_0^{w_{ij}} b\big(k_i(w)\big) dw \ge \Gamma \cdot w_{ij}
    ~.
\]

Finally, for some $1 < \kappa < 2$, define the value of $\Delta_i^D \beta_j$ to be:
\begin{equation}
    \label{eqn:beta-increment-deterministic}
    \begin{aligned}
        \Delta_i^D \beta_j 
        &
        = \kappa \cdot \Delta_i^R \beta_j \\[1ex]
        &
        = \kappa \int_0^{w_{ij}} b\big(k_i(w)\big) dw - \frac{\kappa}{2} \int_{w_{ij}}^{\infty} \sum_{0 \le \ell < k_i(w)} a(\ell) dw 
        ~.
    \end{aligned}
\end{equation}

For concreteness, readers may think of $\kappa = \frac{3}{2}$.
Nonetheless, we will show that the final competitive ratio is insensitive to the choice of $\kappa$, so long as it is neither too close to $1$, nor to $2$.
On the one hand, $\kappa > 1$ ensures that if the algorithm chooses a randomized round with offline vertex $i$ and another vertex as the candidates, the contribution from the other vertex to $\beta_j$ must be at least a $\kappa - 1$ fraction of what $i$ offers; 
otherwise, the algorithm would have preferred a deterministic round with $i$ alone.
On the other hand, we let $\kappa < 2$ because otherwise a randomized round is always inferior to a deterministic one with one of the candidate, whichever makes a larger offer.

See Algorithm~\ref{alg:primal-dual} for the formal definition of the algorithm.

\subsection{Online Primal and Dual Analysis}
\label{sec:primal-dual-analysis}
By the discussion in the previous subsection, it shall be clear that $\beta_j$ is: 
\begin{equation}
    \label{eqn:update-beta}
    \beta_j = \begin{cases}
        \Delta_{i^*}^D \beta_j & \text{in a deterministic round,} \\
        \Delta_{i_1}^R \beta_j + \Delta_{i_2}^R \beta_j & \text{in a randomized round, and} \\
        0 & \text{in an unmatched round}
    \end{cases}
\end{equation}

Next, we explain how to account for the objective by the algorithm, and the updates of $\alpha_i(w)$'s, i.e., the dual variables related to offline vertices.

\subsubsection{Lower Bound of the Algorithm Objective}

Recall that by the CCDF viewpoint:
\[
    \alg = \sum_{i \in L} \int_0^\infty y_i(w) 
    ~.
\]

%In the primal, let $x_{ij}$ be the probability that $j$ is matched to $i$.
It is difficult to account for $y_i(w)$ accurately according to the actual CCDF, due to the complicated correlation from the OCS.
Instead, we will maintain a lower bound of it, denoted as $\bar{y}_i(w)$, using Lemma~\ref{lem:OCS-gamma}.
For any offline vertex $i$, and any weight-level $w > 0$, consider the sequence of randomized rounds in which $i$ is a candidate.
Further, consider the subset of these rounds in which $i$'s edge weight is at least $w$;
they form a collection of sequences of consecutive rounds in the sense of Lemma~\ref{lem:OCS-gamma}, say, with lengths $k_1, k_2, \dots, k_m$.
By Lemma~\ref{lem:OCS-gamma}, the probability that $i$ is never chosen in these rounds is at most:
\[
    \prod_{\ell=1}^m 2^{-k_\ell} \big(1 - \gamma\big)^{k_\ell-1}
    ~.
\]

Here, we use the fact that the decisions made by the online primal dual algorithm are deterministic, except for the randomness in the OCS. 
In particular, its choices of $i_1$, $i_2$, $i^*$ and the decisions of whether a round is unmatched, randomized, or deterministic are independent of the random bits used by the OCS. 
Hence, we may view the sequence of pairs of candidates sent to the OCS as fixed, and apply Lemma~\ref{lem:OCS-gamma}. 

To maintain $\bar{y}_i(w)$ such that $1 - \bar{y}_i(w)$ is equal to the above upper bound at all time, we update it as follows:

\begin{itemize}
    \item Initially, let $\bar{y}_i(w) = 0$.
    \item Suppose $i$ is matched in a deterministic round, in which its edge weight is at least $w$.
    Then, let $\bar{y}_i(w) = 1$.
    \item Suppose $i$ is matched in a randomized round, in which its edge weight is at least $w$.
    Further, let $w'$ be its edge weight the last time it is chosen in a randomized round;
    $w' = 0$ if it is the first randomized round in which $i$ is involved.
    Then, decrease $1 - \bar{y}_i(w)$ by a $\frac{1}{2} ( 1 - \gamma )$ factor if $w' \ge w$, i.e., if it is not the first step of a consecutive sequence defined by the steps in which $i$ is involved and has edge weight at least $w$; 
    otherwise, decrease $1 - \bar{y}_i(w)$ by half only, to account for the $-1$ in the exponent of $1 - \gamma$.
\end{itemize}

Then, Lemma~\ref{lem:OCS-gamma} ensures that $\bar{y}_i(w)$'s are indeed valid lower bounds of $y_i(w)$'s.
In the rest of the argument, we will use:
\[
    \bar{\alg} = \sum_{i \in L} \int_0^\infty \bar{y}_i(w) dw
    ~,
\]
as a surrogate objective in place of $\alg$ to account for the objective by the algorithm.

\begin{lemma}
    \label{lem:alg-lower-bound}
    We have:
    \[
        \alg \ge \bar{\alg} 
        ~.
    \]
\end{lemma}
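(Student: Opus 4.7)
The plan is to establish the pointwise inequality $\bar{y}_i(w) \le y_i(w)$ for every offline vertex $i \in L$ and every weight-level $w > 0$; integrating over $w$ and summing over $i \in L$ then gives $\bar{\alg} \le \alg$ directly from the definitions.

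Fix $i$ and $w$. First handle the easy case: if at some point $i$ is matched in a deterministic round whose edge weight is at least $w$, then the update rule sets $\bar{y}_i(w) = 1$; but this same event also guarantees that the heaviest edge matched to $i$ has weight at least $w$, so $y_i(w) = 1$ as well, and the inequality holds with equality. In the remaining case, enumerate the maximal sequences of consecutive randomized rounds at weight-level $w$ in which $i$ is involved, of lengths $k_1, k_2, \dots, k_m \ge 1$ (in the sense defined just before Lemma~\ref{lem:OCS-generalized}). By the update rule, each such round multiplies $1 - \bar{y}_i(w)$ by $\tfrac{1}{2}$ if it is the first round of its sequence and by $\tfrac{1}{2}(1-\gamma)$ otherwise, so a straightforward induction on the rounds gives
\[
    1 - \bar{y}_i(w) \;=\; \prod_{\ell=1}^m 2^{-k_\ell}\,(1-\gamma)^{k_\ell-1} \,.
\]

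Next I would apply Lemma~\ref{lem:OCS-gamma} to the same collection of consecutive sequences. The key point, already stressed in the preceding paragraph, is that the primal dual algorithm's choices of $i_1$, $i_2$, $i^*$, and of whether each round is randomized, deterministic, or unmatched, depend only on the state variables $k_i(\cdot)$, which themselves depend only on past algorithmic decisions and not on the random bits consumed by the OCS. Hence the sequence of candidate pairs sent to the OCS may be regarded as fixed, Lemma~\ref{lem:OCS-gamma} applies verbatim, and the probability that $i$ is never chosen in any of these randomized rounds is at most the same product above. Since $y_i(w)$ is at least the probability that $i$ is chosen in at least one of these rounds, we conclude $1 - y_i(w) \le 1 - \bar{y}_i(w)$.

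The only place there is anything to think about is this independence between the algorithm's macroscopic decisions and the OCS randomness; once that is in place the proof is just matching the multiplicative update of $1 - \bar{y}_i(w)$ term by term with the bound from Lemma~\ref{lem:OCS-gamma}. Integrating $\bar{y}_i(w) \le y_i(w)$ over $w \in (0, \infty)$ and summing over $i \in L$ completes the argument.
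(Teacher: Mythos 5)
Your proof is correct and matches the paper's reasoning exactly. The paper does not give an explicit proof block for Lemma~\ref{lem:alg-lower-bound}; instead, the argument is embedded in the surrounding text: the update rule for $\bar{y}_i(w)$ is chosen so that $1 - \bar{y}_i(w)$ equals the product $\prod_{\ell} 2^{-k_\ell}(1-\gamma)^{k_\ell-1}$, the independence of the algorithm's candidate choices from the OCS's random bits is invoked to justify treating the pair sequence as fixed, and Lemma~\ref{lem:OCS-gamma} then gives $1 - y_i(w) \le 1 - \bar{y}_i(w)$ pointwise, after which one integrates and sums. You have reconstructed precisely this argument, including the one genuinely subtle point (that the state variables $k_i(\cdot)$ are measurable with respect to the input alone and not the OCS randomness), so there is nothing to fix.
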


In particular, we will obtain the inequality $\alg \ge \dual$ by ensuring that the increment in the surrogate objective $\bar{\alg}$ is at least that in the dual objective $\dual$ in every round.

The next lemma follows by the definition of $\bar{y}_i(w)$'s and $k_i(w)$'s, in particular, by how much $1 - \bar{y}_i(w)$ decreases in a randomized round in which $i$ is a candidate, and its edge weight is at least $w$.

\begin{lemma}
    \label{lem:unmatched-portion-lower-bound}
    For any offline vertex $i$ and any weight-level $w > 0$, we have:
    \[
        1 - \bar{y}_i(w) \ge 2^{-k_i(w)} \big(1 - \gamma\big)^{\max\{k_i(w)-1, 0\}}
        ~.
    \]
\end{lemma}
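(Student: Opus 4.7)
\textbf{Proof proposal for Lemma~\ref{lem:unmatched-portion-lower-bound}.} The plan is to track the evolution of $1 - \bar{y}_i(w)$ according to the three update rules in the definition of $\bar{y}_i(w)$, and compare the resulting product to the stated bound. First I would dispose of two trivial cases: if $k_i(w) = 0$, then no randomized or deterministic round involving $i$ with edge weight at least $w$ has occurred, so $\bar{y}_i(w) = 0$ and the right-hand side is $2^0 (1-\gamma)^0 = 1$, giving equality. If $i$ has been chosen in a deterministic round whose edge weight is at least $w$, then by convention $k_i(w) = \infty$, the right-hand side is $0$, and the inequality is vacuous.

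In the remaining case, fix $i$ and $w > 0$ and partition the randomized rounds in which $i$ is a candidate and its edge weight is at least $w$ into the maximal consecutive sequences arising in Lemma~\ref{lem:OCS-gamma}, say of lengths $k_1, k_2, \dots, k_m \ge 1$ with $\sum_{\ell=1}^m k_\ell = k_i(w)$. Here, two such rounds are in the same sequence exactly when $i$ is not involved in any intervening round with edge weight at least $w$, so by the third bullet of the update rule the first round of each sequence multiplies $1 - \bar{y}_i(w)$ by $\tfrac{1}{2}$, while every subsequent round in the same sequence multiplies it by $\tfrac{1}{2}(1-\gamma)$. Starting from the initial value $1 - \bar{y}_i(w) = 1$, I would therefore conclude
\[
    1 - \bar{y}_i(w) = \prod_{\ell=1}^m \tfrac{1}{2} \cdot \big( \tfrac{1}{2}(1-\gamma) \big)^{k_\ell - 1} = 2^{-k_i(w)} (1-\gamma)^{k_i(w) - m} ~.
\]

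Finally, since $k_i(w) \ge 1$ in this case we have $m \ge 1$, hence $k_i(w) - m \le k_i(w) - 1 = \max\{k_i(w)-1, 0\}$; because $0 \le 1 - \gamma \le 1$, this exponent inequality reverses when we raise $1-\gamma$ to these powers, yielding $(1-\gamma)^{k_i(w) - m} \ge (1-\gamma)^{\max\{k_i(w)-1, 0\}}$ and hence the desired bound.

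I do not anticipate a substantive obstacle: the argument is essentially bookkeeping dictated by the three bullets defining $\bar{y}_i(w)$. The only mild subtlety is recognizing that the ``first step of a consecutive sequence'' in the update rule exactly matches the combinatorial decomposition used in Lemma~\ref{lem:OCS-gamma}, which is what makes the exponent of $(1-\gamma)$ come out to $k_\ell - 1$ per sequence rather than $k_\ell$, and care with the $\max\{k_i(w)-1, 0\}$ boundary in the $k_i(w) = 0$ case.
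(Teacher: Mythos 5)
Your proof is correct and follows essentially the same bookkeeping as the paper's, tracking how the update rule multiplies $1-\bar{y}_i(w)$ by $\tfrac{1}{2}$ at the start of each consecutive run and by $\tfrac{1}{2}(1-\gamma)$ otherwise. The only stylistic difference is that you compute the exact value $2^{-k_i(w)}(1-\gamma)^{k_i(w)-m}$ and then use $m \ge 1$, whereas the paper directly lower-bounds each non-initial factor by $\tfrac{1}{2}(1-\gamma)$; these are the same argument.
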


\begin{proof}
    The value of $1 - \bar{y}_i(w)$ is initially $1$.
    Then, it decreases by $\frac{1}{2}$ in the first of the $k_i(w)$ randomized rounds in which $i$ is a candidate and has edge weight at least $w$.
    Finally, in each of the subsequent $k_i(w) - 1$ rounds, it decreases by at most $\frac{1}{2} \big( 1 - \gamma \big)$.
\end{proof}

As corollaries, we can lower bound the increment in $y_i(w)$.

\begin{lemma}
    \label{lem:primal-increment-deterministic}
    For any offline vertex $i$ and any weight-level $w > 0$, suppose $i$ is chosen as the candidate in a deterministic round, and its edge weight is at least $w$.
    Then, the increment in $\bar{y}_i(w)$ is at least:
    \[
        2^{-k_i(w)} \big(1 - \gamma\big)^{\max\{k_i(w)-1,0\}}
        ~.
    \]
\end{lemma}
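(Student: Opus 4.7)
The plan is to reduce this lemma essentially to a one-line consequence of Lemma~\ref{lem:unmatched-portion-lower-bound}. Recall from the update rules for $\bar{y}_i(w)$ that when $i$ is chosen as the candidate in a deterministic round with edge weight at least $w$, the algorithm sets $\bar{y}_i(w) = 1$. Hence the increment in $\bar{y}_i(w)$ during this round equals $1 - \bar{y}_i(w)$, evaluated just before the update.

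The first step is therefore to identify the increment as the pre-update value of $1 - \bar{y}_i(w)$. The second step is to apply Lemma~\ref{lem:unmatched-portion-lower-bound}, which gives
\[
    1 - \bar{y}_i(w) \ge 2^{-k_i(w)} \big(1 - \gamma\big)^{\max\{k_i(w)-1,\,0\}},
\]
where $k_i(w)$ is the number of preceding randomized rounds in which $i$ was a candidate at weight-level at least $w$. Composing these two observations yields exactly the desired lower bound on the increment.

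There is no real obstacle here; the only subtlety worth noting explicitly is that $k_i(w)$ in the statement refers to its value immediately \emph{before} the deterministic round (after the round, by the state variable convention $k_i(w) = \infty$, which would render the bound vacuous). Since Lemma~\ref{lem:unmatched-portion-lower-bound} was proved about the current state of $\bar{y}_i(w)$ and $k_i(w)$, applying it at the pre-update moment is precisely the correct invocation, and the lemma follows.
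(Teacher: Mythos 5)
Your proof is correct and matches the paper's intended argument: the paper states this lemma as an immediate corollary of Lemma~\ref{lem:unmatched-portion-lower-bound}, relying on exactly the observation that a deterministic match sets $\bar{y}_i(w) = 1$, so the increment is the pre-update value of $1 - \bar{y}_i(w)$. Your explicit note about evaluating $k_i(w)$ before the update (rather than at $\infty$ afterward) is a sensible clarification of a convention the paper leaves implicit.
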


\begin{lemma}
    \label{lem:primal-increment-randomized}
    For any offline vertex $i$ and any weight-level $w > 0$, suppose $i$ is chosen as one of the candidates in a randomized round, and its edge weight is at least $w$.
    Then, the increment in $\bar{y}_i(w)$ is at least:
    \[ 
        2^{-k_i(w)-1} \big(1 - \gamma\big)^{\max\{k_i(w)-1, 0\}} 
        ~.
    \]
    Suppose further vertex $i$'s edge weight is at least $w$ the last time it is chosen as a candidate in a randomized round, the increment is at least ($k_i(w) \ge 1$ in this case):
    \[
        2^{-k_i(w)-1} \big(1 - \gamma\big)^{k_i(w)-1} \big(1 + \gamma\big)
        ~.
    \]
\end{lemma}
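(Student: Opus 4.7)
The key observation is that the increment in $\bar{y}_i(w)$ during this round equals the decrease in $1-\bar{y}_i(w)$, and the update rule for $\bar{y}_i$ prescribes that $1-\bar{y}_i(w)$ be multiplied either by $\tfrac{1}{2}$ or by $\tfrac{1}{2}(1-\gamma)$, depending on whether the current round continues a consecutive sequence (at weight level $w$) or starts a new one. Lemma~\ref{lem:unmatched-portion-lower-bound} then supplies a lower bound on the pre-round value of $1-\bar{y}_i(w)$, and the two desired inequalities fall out by combining these.

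More precisely, let $k$ denote the value of $k_i(w)$ just before the current randomized round, consistent with Lemma~\ref{lem:primal-increment-deterministic}. I would split into two cases. In the \emph{new-sequence} case, where the most recent randomized round involving $i$ had weight strictly below $w$ (or no such round exists), the shrinking factor is $\tfrac{1}{2}$, so the increment is $\tfrac{1}{2}(1-\bar{y}_i(w)) \ge 2^{-k-1}(1-\gamma)^{\max\{k-1,0\}}$ by Lemma~\ref{lem:unmatched-portion-lower-bound}; this gives the first inequality. In the \emph{continuation} case, where that previous weight was at least $w$, the shrinking factor is $\tfrac{1}{2}(1-\gamma)$, so the increment is $\tfrac{1+\gamma}{2}(1-\bar{y}_i(w))$, which is a factor $(1+\gamma) \ge 1$ larger than what the new-sequence case gives and so also satisfies the first inequality. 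For the second inequality, the additional hypothesis is exactly the continuation case; moreover, since that previous randomized round had weight $\ge w$ it contributed to $k_i(w)$, so $k \ge 1$, the $\max$ in the exponent drops out, and the bound $2^{-k-1}(1+\gamma)(1-\gamma)^{k-1}$ follows by invoking Lemma~\ref{lem:unmatched-portion-lower-bound} once more.

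I do not anticipate any real obstacle here; the proof is essentially a one-line calculation in each case, once the update rule and Lemma~\ref{lem:unmatched-portion-lower-bound} are in place. The only subtlety worth flagging is the bookkeeping convention that $k_i(w)$ refers to its pre-round value, and the matching of the lemma's hypothesis ``weight $\ge w$ the last time $i$ was chosen in a randomized round'' with the condition $w' \ge w$ used by the update rule for $\bar{y}_i$.
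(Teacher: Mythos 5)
Your proof is correct and follows essentially the same route as the paper's: identify the multiplicative decrease factor ($\tfrac{1}{2}$ or $\tfrac{1}{2}(1-\gamma)$) according to whether the round starts or continues a consecutive sequence at weight level $w$, convert that into an increment in $\bar{y}_i(w)$ (a $\tfrac{1}{2}$ or $\tfrac{1}{2}(1+\gamma)$ fraction of $1-\bar{y}_i(w)$), and bound $1-\bar{y}_i(w)$ from below via Lemma~\ref{lem:unmatched-portion-lower-bound}.
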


\begin{proof}
    By definition, $1 - \bar{y}_i(w)$ decreases by a factor of either $\frac{1}{2} (1 - \gamma)$ or $\frac{1}{2}$ in a randomized round, depending on whether vertex $i$'s edge weight is at least $w$ the last time it is chosen in a randomized round.
    Therefore, the increment in $\bar{y}_i(w)$ is either a $\frac{1}{2} (1 + \gamma)$ fraction of $1 - \bar{y}_i(w)$, or a $\frac{1}{2}$ fraction. 
    Putting together with the lower bound of $1 - \bar{y}_i(w)$ in Lemma~\ref{lem:unmatched-portion-lower-bound} proves the lemma.
\end{proof}

\subsubsection{Dual Updates of Offline Vertices: Proof of Eqn.~\eqref{eqn:invariant-alpha}}

\paragraph{Deterministic Rounds.}
The updates of $\alpha_i(w)$'s in a deterministic round is simple.
Fixed any offline vertex $i$.
Suppose it is matched in a deterministic round in which its edge weight is $w_{ij}$.
Then, for any weight-level $w > w_{ij}$, the value of $k_i(w)$ stays the same and thus, $\alpha_i(w)$ stays the same.
For any weight-level $w \le w_{ij}$, on the other hand, the value of $k_i(w)$ becomes $\infty$ by definition.
Therefore, to maintain the invariant in Eqn.~\eqref{eqn:invariant-alpha}, increase $\alpha_i(w)$ by:
\begin{equation}
    \label{eqn:alpha-increment-deterministic}
    \sum_{\ell = k_i(w)}^{\infty} a(\ell)
    ~.
\end{equation}

\paragraph{Randomized Rounds.}
The updates of $\alpha_i(w)$'s in randomized rounds are more subtle.
Fixed any offline vertex $i$.
Suppose it is one of the two candidates in a randomized round in which its edge weight is $w_{ij}$.
Further consider $i$'s edge weight the last time it is chosen in a randomized round, denoted as $w'$; 
let $w' = 0$ if it is the first randomized round in which $i$ is involved.
Then, $w_{ij}$ and $w'$ partition the weight-levels $w > 0$ into up to three subsets, each of which requires a different update rule for $\alpha_i(w)$.
Concretely, increase $\alpha_i(w)$ by:
\begin{equation}
    \label{eqn:alpha-increment-randomized}
    \begin{cases}
        a\big(k_i(w)\big) 
        &
        0 < w \le w_{ij}, w' \textrm{ or } k_i(w) = 0 ~; \\
        a\big(k_i(w)\big) - 2^{-k_i(w)-1} \big(1 - \gamma\big)^{k_i(w)-1} \gamma 
        &
        w' < w \le w_{ij} \textrm{ and } k_i(w) \ge 1 ~; \\
        2^{-k_i(w)-1} \big(1 - \gamma\big)^{k_i(w)-1} \gamma
        &
        w > w_{ij} \textrm{ and } k_i(w) \ge 1 ~.
    \end{cases}
\end{equation}

The first case is the base case of the analysis.
We simply increase $\alpha_i(w)$ by $a\big(k_i(w)\big)$ to maintain the invariant in Eqn.~\eqref{eqn:invariant-alpha}.
For example, in the unweighted version of the problem, all relevant weight-levels $0 < w \le 1$ are always updated according to this base case.

For a weight-level $w$ that falls into the second case (if there is any), the increment in $\alpha_i(w)$ is smaller than that in the base case by $2^{-k_i(w)-1} \big(1 - \gamma\big)^{k_i(w)-1} \gamma$. 
This is the difference between the lower bounds of the increment in $\bar{y}_i(w)$ in Lemma~\ref{lem:primal-increment-randomized} depending on whether $i$'s edge weight is also at least $w$ the last time it is chosen in a randomized round.
Intuitively, since the increase in the surrogate algorithm objective $\bar{\alg}$, due to vertex $i$ and weight-level $w$, is smaller than that in the base case, we subtract this amount from the increment in $\alpha_i(w)$ so that the contribution to $\beta_j$ is unaffected.

How can we still maintain the invariant in Eqn.~\eqref{eqn:invariant-alpha}?
Whenever the second case happens, the same weight-level must fall into the third case in the previous randomized round in which $i$ is involved. 
The same amount is prepaid to $\alpha_i(w)$ back then.
An upper bound of this prepaid amount is subtracted from $\beta_j$, as the second term in Eqn.~\eqref{eqn:beta-increment-randomized}.

\subsubsection{Comparing the Objectives}

This subsection derives a set of sufficient conditions under which the increment in $\bar{\alg}$ is at least that in $\dual$.
Then, the first condition of Lemma~\ref{lem:online-primal-dual-rephrase}, regarding the objectives $\alg$ and $\dual$, follows by:
\[
    \alg \ge \bar{\alg} \ge \dual
    ~.
\]

\paragraph{Deterministic Rounds.}
Suppose $j$ is matched to $i$ in a deterministic round.
Then, by Lemma~\ref{lem:primal-increment-deterministic}, the increment in $\bar{\alg}$ is at least:
\[
    \int_0^{w_{ij}} 2^{-k_i(w)} \big(1-\gamma\big)^{\max\{k_i(w)-1, 0\}} dw
    ~.
\]

In the dual, by Eqn.~\eqref{eqn:alpha-increment-deterministic}, the total increase in $\alpha_i(w)$'s is:
\[
    \int_0^{w_{ij}} \sum_{k = k_i(w)}^\infty a(k) dw
    ~.
\]

On the other hand, by Eqn.~\eqref{eqn:beta-increment-deterministic}, the value of $\beta_j$ is at most:
\[
    \kappa \int_0^{w_{ij}} b\big(k_i(w)\big) dw - \frac{\kappa}{2} \int_{w_{ij}}^{\infty} \sum_{0 \le \ell < k_i(w)} a(\ell) dw 
    \le \kappa \int_0^{w_{ij}} b\big(k_i(w)\big) dw
    ~.
\]

Here, dropping the second term is wlog as we may indeed have $k_i(w) = 0$ for all $w > w_{ij}$ in the worst case.

Putting together, it suffices to ensure that:
\[
    \int_0^{w_{ij}} \sum_{k = k_i(w)}^\infty a(k) dw + \kappa \int_0^{w_{ij}} b\big(k_i(w)\big) dw 
    \le 
    \int_0^{w_{ij}} 2^{-k_i(w)} \big(1-\gamma\big)^{\max\{k_i(w)-1, 0\}} dw
    ~.
\]

Since all terms are integration from $0$ to $w_{ij}$, it suffices to have the inequality without the integration.
That is, it suffices to have that:
\begin{equation}
    \label{eqn:gain-split-deterministic}
    \forall k \ge 0 \quad : \qquad 
    \sum_{\ell = k}^\infty a(\ell) + \kappa \cdot b\big(k\big)
    \le
    2^{-k} \big(1-\gamma\big)^{\max\{k-1, 0\}}
    ~.
\end{equation}

\paragraph{Randomized Rounds.}
Suppose $j$ is matched in a randomized round with $i_1$ and $i_2$ as the two candidates.
We will show that the increment in $\bar{\alg}$ due to $i_1$ is at least the increase in $\alpha_{i_1}(w)$'s plus its contribution to $\beta_j$, i.e., $\Delta_{i_1}^R \beta_j$.
Then, by symmetry, the claim also holds for $i_2$.
Putting together proves that the increase in $\bar{\alg}$ is at least that in $\dual$.

For simplicity of notations, let $w_1$ denote the edge weights of $i_1$ in this round, and further let $w_1'$ denote the edge weight of $i_1$ the last time it is chosen in a randomized round.
Then, $w_1$ and $w_1'$ partition the weight-levels $w > 0$ into up to three subsets that correspond to the three cases of the dual updates of $\alpha_i(w)$ in a randomized round, as in Eqn.~\eqref{eqn:alpha-increment-randomized}.

\paragraph{Case 1: $w \le w_1, w_1'$ or $k_i(w) = 0$.}
Then, by Lemma~\ref{lem:primal-increment-randomized}, the increment in $\bar{\alg}$ due to $i_1$ at weight-level $w$ is at least:
\[
    \begin{cases}
        \frac{1}{2} & k_i(w) = 0 ~; \\
        2^{-k_i(w)-1} \big(1 - \gamma\big)^{k_i(w)-1} \big(1 + \gamma\big) & k_i(w) \ge 1 \textrm{ and } w \le w_1, w_1' ~.
    \end{cases}
\]

By the first case of Eqn.~\eqref{eqn:alpha-increment-randomized}, the increase in $\alpha_i(w)$ is:
\[
    a\big(k_i(w)\big)
    ~.
\]

Finally, the contribution to the first term of $\beta_j$, at weight level $w$, is:
\[
    b\big(k_i(w)\big)
    ~.
\]

Hence, we will ensure that:
\begin{equation}
    \label{eqn:gain-split-randomized}
    \begin{aligned}
        & a(0) + b(0) \le \frac{1}{2} \\[1ex]
        \forall k \ge 1 \quad : \qquad & a(k) + b(k) \le 2^{-k-1} \big(1 - \gamma\big)^{k-1} \big(1 + \gamma\big)
    \end{aligned}
\end{equation}

\paragraph{Case 2: $w_1' < w \le w_1$ and $k_i(w) \ge 1$.}
Then, by Lemma~\ref{lem:primal-increment-randomized}, the increment in $\bar{\alg}$ due to $i_1$ at weight-level $w$ is at least:
\[
    2^{-k_i(w)-1} \big(1 - \gamma\big)^{k_i(w)-1}
    ~.
\]

By the second case of Eqn.~\eqref{eqn:alpha-increment-randomized}, the increase in $\alpha_i(w)$ is:
\[
    a\big(k_i(w)\big) - 2^{-k_i(w)-1} \big(1 - \gamma\big)^{k_i(w)-1} \gamma
    ~.
\]

Finally, the contribution to the first term of $\beta_j$, at weight level $w$, is:
\[
    b\big(k_i(w)\big)
    ~.
\]

Hence, we will ensure that:
\[
    a\big(k_i(w)\big) - 2^{-k_i(w)-1} \big(1 - \gamma\big)^{k_i(w)-1} \gamma + b\big(k_i(w)\big) 
    \le 
    2^{-k_i(w)-1} \big(1 - \gamma\big)^{k_i(w)-1} 
    ~.
\]

Rearranging the second term to the RHS, this becomes the second case of Eqn.~\eqref{eqn:gain-split-randomized}.

\paragraph{Case 3: $w > w_1$ and $k_i(w) \ge 1$.}
The increment in $\bar{\alg}$ due to $i_1$ at weight-level $w$ is zero.

By the last case of Eqn.~\eqref{eqn:alpha-increment-randomized}, the increase in $\alpha_i(w)$ is:
\[
    2^{-k_i(w)-1} \big(1 - \gamma\big)^{k_i(w)-1} \gamma
    ~.
\]

The negative contribution from the second term of $\beta_j$, at weight-level $w$, is:
\[
    \frac{1}{2} \sum_{0 \le k < k_i(w)} a(k)
    ~.
\]

Hence, we will ensure that:
\[
    2^{-k_i(w)-1} \big(1 - \gamma\big)^{k_i(w)-1} \gamma - \frac{1}{2} \sum_{0 \le k < k_i(w)} a(k) \le 0
    ~.
\]

Since the first term is decreasing in $k_i(w)$ while the second term is increasing in it, it suffices to ensure the above inequality for $k_i(w) = 1$, i.e.:
\begin{equation}
    \label{eqn:gain-split-prepaid}
    a(0) \ge \frac{\gamma}{2}
    ~.
\end{equation}

\bigskip

Combining the three cases, we get that the increment in $\bar{\alg}$ in a randomized round is at least the increment in $\dual$.

\subsubsection{Approximate Dual Feasibility}
Finally, we derive a set of sufficient conditions under which dual feasibility holds up to a $\Gamma$ factor, as stated in the second condition of Lemma~\ref{lem:online-primal-dual-rephrase}.
Consider any $i \in L$ and any $j \in R$, and the values of $k_i(w)$'s, $0 < w \le w_{ij}$, at the time when $j$ arrives.

\paragraph{Boundary Condition at the Limit.}
%: Offline Vertices Matched with Certainty.}
%
First, it may be the case that $k_i(w) = \infty$ for any $0 < w \le w_{ij}$ and $j$ is unmatched.
Then, we may have $\beta_j = 0$ in this round and thus, the contribution from $\alpha_i(w)$'s alone must ensure approximate dual feasibility.
To do so, we will ensure that the value of $\alpha_i(w)$ is at least $\Gamma$ whenever $k_i(w) = \infty$.
By the invariant in Eqn.~\eqref{eqn:invariant-alpha}, it suffices to have:
\begin{equation}
    \label{eqn:feasibility-alpha-infty}
    \sum_{\ell = 0}^\infty a(\ell) \ge \Gamma
    ~.
\end{equation}

Next, consider $5$ different cases depending on how the algorithm matches $j$, in particular, whether the round of $j$ is randomized, deterministic, or unmatched, and whether $i$ is chosen as a candidate.
We first discuss the cases when the round of $j$ is randomized;
then, we will show that the other cases only need weaker conditions.

\paragraph{Case 1: Round of $j$ is a randomized, $i$ is not a candidate.}
The first case is when $j$ is matched in a randomized round but $i$ is not chosen as one of the candidates, i.e., $i \ne i_1, i_2$.
In this case, we have:
\[
    \beta_j = \Delta_{i_1}^R \beta_j + \Delta_{i_2}^R \beta_j
    ~.
\]

Since $i$ is not chosen, both terms on the RHS above is at least $\Delta_i^R \beta_j$.
Therefore, we have:
\[
    \int_0^\infty \alpha_i(w) dw + \beta_j \ge \int_0^\infty \alpha_i(w) dw + 2 \Delta_{i}^R \beta_j 
    ~.
\]

Further, lower bounding $\alpha_i(w)$ by Eqn.~\eqref{eqn:invariant-alpha}, and plugging in the definition of $\Delta_i^R \beta_j$ in Eqn.~\eqref{eqn:beta-increment-randomized}, the above is at least:
\begin{align*}
    & 
    \int_0^\infty \sum_{0 \le \ell < k_i(w)} a(\ell) dw + 2 \bigg( \int_0^{w_{ij}} b\big(k_i(w)\big) dw - \frac{1}{2} \int_{w_{ij}}^\infty \sum_{0 \le \ell < k_i(w)} a(\ell) dw \bigg) \\
    & \qquad
    = \int_0^{w_{ij}} \sum_{0 \le \ell < k_i(w)} a(\ell) dw + 2 \int_0^{w_{ij}} b\big(k_i(w)\big) dw
    ~.
\end{align*}

Therefore, to show that it is at least $\Gamma \cdot w_{ij}$, it suffices to ensure that:
\begin{equation}
    \label{eqn:feasibility-randomized-not-chosen}
    \forall k \ge 0 \quad : \qquad \sum_{0 \le \ell < k} a(\ell) + 2 \cdot b\big(k\big) \ge \Gamma
    ~.
\end{equation}

\paragraph{Case 2: Round of $j$ is randomized, $i$ is a candidate.}
The second case is when $j$ is matched in a randomized round and $i$ is one of the candidates.
Without loss of generality, suppose $i = i_2$, and $i_1$ is the other candidate.
We have:
\[
    \beta_j = \Delta_{i}^R \beta_j + \Delta_{i_1}^R \beta_j
    ~.
\]

We need to lower bound $\Delta_{i_1}^R \beta_j$ in terms of $\Delta_i^R \beta_j$.
Since the algorithm does not choose a deterministic round with $i$ alone, we have:
\[
    \Delta_{i}^R \beta_j + \Delta_{i_1}^R \beta_j \ge \Delta_i^D \beta_j
    ~.
\]

Further, we have $\Delta_i^D \beta_j = \kappa \cdot \Delta_i^R \beta_j$ by Eqn.~\eqref{eqn:beta-increment-deterministic}. 
Putting together, we get that:
\[
    \beta_j \ge \kappa \cdot \Delta_i^R \beta_j
    ~.
\]

Therefore, we have:
\[
    \int_0^\infty \alpha_i(w) dw + \beta_j \ge \int_0^\infty \alpha_i(w) dw + \kappa \cdot \Delta_{i}^R \beta_j 
    ~.
\]

Recall that $k_i(w)$ denote its value at the time when $j$ arrives.
In the round of $j$, the value of $k_i(w)$ increases by $1$ for any weight-level $0 < w \le w_{ij}$, and stays the same for any other weight-level $w > w_{ij}$.
Therefore, lower bounding $\alpha_i(w)$ by Eqn.~\eqref{eqn:invariant-alpha}, the first term on the RHS above is at least:
\[
    \int_0^{w_{ij}} \sum_{0 \le \ell \le k_i(w)} a(\ell) dw + \int_{w_{ij}}^\infty \sum_{0 \le \ell < k_i(w)} a(\ell) dw
    ~.
\]

Further, by the definition of $\Delta_i^R \beta_j$ in Eqn.~\eqref{eqn:beta-increment-randomized}, the second term on the RHS above is:
\[
    \kappa \cdot \bigg( \int_0^{w_{ij}} b\big(k_i(w)\big) dw - \frac{1}{2} \int_{w_{ij}}^\infty \sum_{0 \le \ell < k_i(w)} a(\ell) dw \bigg) 
    ~.
\]

Putting together, and by $\kappa < 2$, we have:
\[
    \int_0^\infty \alpha_i(w) dw + \beta_j \ge \int_0^{w_{ij}} \bigg( \sum_{0 \le \ell \le k_i(w)} a(\ell) + \kappa \cdot b\big(k_i(w)\big) \bigg) dw
    ~
\]

Therefore, to show that the above is at least $w_{ij}$, it suffices to ensure that:
\begin{equation}
    \label{eqn:feasibility-randomized-chosen}
    \forall k \ge 0 \quad : \qquad \sum_{0 \le \ell \le k} a(\ell) + \kappa \cdot b(k) \ge \Gamma
    ~.
\end{equation}

Comparing with Eqn.~\eqref{eqn:feasibility-randomized-not-chosen}, there are two differences.
First, it is important that the above summation includes $\ell = k$, unlike in the previous case.
We can do this because $i$ is one of the two candidates and therefore, $k_i(w)$ increases by $1$ in the round of $j$ for any weight level $w \le w_{ij}$.
On the other hand, the coefficient of the second term is smaller, i.e., $\kappa < 2$.

\paragraph{Case 3: Round of $j$ is deterministic, $i$ is not the candidate.}
The third case is when $j$ is matched to some $i^* \ne i$ in a deterministic round.
Then, we have:
\[
    \beta_j = \Delta_{i^*}^D \beta_j
    ~.
\]

We need to lower bound it in terms of $\Delta_i^R \beta_j$.
Since the algorithm does not choose a randomized round with $i$ and $i^*$ as the candidates, we have:
\[
    \Delta_{i^*}^D \beta_j > \Delta_{i^*}^R \beta_j + \Delta_i^R \beta_j
    ~.
\]

By Eqn.~\eqref{eqn:beta-increment-deterministic}, we have:
\[
    \Delta_{i^*}^R \beta_j = \frac{1}{\kappa} \cdot \Delta_{i^*}^D \beta_j \ge \frac{1}{2} \cdot \Delta_{i^*}^D \beta_j
    ~.
\]

Here, we use the fact that $\Delta_{i^*}^D \beta_j \ge 0$, because $i^*$ is chosen in a deterministic round.
Then, we have:
\[
    \Delta_{i^*}^D \beta_j > 2 \cdot \Delta_i^R \beta_j
    ~.
\]

Therefore, we deduce that:
\[
    \int_0^{w_{ij}} \alpha_i(w) dw + \beta_j > \int_0^{w_{ij}} \alpha_i(w) dw + 2 \cdot \Delta_{i}^R \beta_j 
    ~.
\]

This is identical to the lower bound that we derive in the first case.
Therefore, approximate dual feasibility is guaranteed by Eqn.~\eqref{eqn:feasibility-randomized-not-chosen}.

\paragraph{Case 4: Round of $j$ is deterministic, $i$ is the candidate.}
The fourth case is when $j$ is matched to $i$ in a deterministic round.
Then, for any $0 < w \le w_{ij}$, we have $k_i(w) = \infty$ after this round and thus, we get approximate dual feasibility from the contribution of $\alpha_i(w)$'s alone, due to Eqn.~\eqref{eqn:invariant-alpha} and Eqn.~\eqref{eqn:feasibility-alpha-infty}.

\paragraph{Case 5: Round of $j$ is Unmatched.}
Finally, consider the case when $j$ is left unmatched by the algorithm.
Then, $\beta_j = 0$.
Moreover, it must be that $\Delta_i^D \beta_j < 0$ because the algorithm chooses to leave $j$ unmatched, which further implies $\Delta_i^R \beta_j < 0$ by Eqn.~\eqref{eqn:beta-increment-deterministic}.
Therefore:
\begin{align*}
    \int_0^\infty \alpha_i(w) dw + \beta_j
    &
    = \int_0^\infty \alpha_i(w) dw \\
    &
    > \int_0^\infty \alpha_i(w) dw + 2 \cdot \Delta_i^R \beta_j
    ~.
\end{align*}

Once again, it is the same lower bound that we derive in the first case and therefore, approximately dual feasibility is guaranteed by Eqn.~\eqref{eqn:feasibility-randomized-not-chosen}.

\subsection{Optimizing the Gain Sharing Parameters}
\label{sec:gain-sharing}

To optimize the competitive ratio $\Gamma$ in the above online primal dual analysis, it remains to solve the gain sharing parameters $a(k)$'s and $b(k)$'s from the following LP:
\begin{align*}
    \textrm{maximize} \quad & 
    \Gamma \\
    \textrm{subject to} \quad & \text{Eqn.~\eqref{eqn:gain-split-deterministic}, \eqref{eqn:gain-split-randomized}, \eqref{eqn:gain-split-prepaid}, \eqref{eqn:feasibility-alpha-infty}, \eqref{eqn:feasibility-randomized-not-chosen}, and \eqref{eqn:feasibility-randomized-chosen}}
\end{align*}

We solve a more restricted LP which is finite, by letting $a(k) = b(k) = 0$ for all $k > \kmax$, where $\kmax$ is a sufficiently large integer.
%Further, we eliminate $a^D(k)$'s and $a^R(k)$'s using Eqn.~\eqref{eqn:constraint-gain-split}.
It becomes the following:
\begin{align}
    \textrm{maximize} \quad & 
    \Gamma \label{eqn:ratio-lp} \\[1ex]
    \textrm{subject to} \quad & 
    \sum_{k \le \ell \le \kmax} a(\ell) + \kappa \cdot b(k) \le 2^{-k} \big(1 - \gamma\big)^{\max\{k-1,0\}}  && \forall 0 \le k \le \kmax \notag \\
    &
    a(0) + b(0) \le \frac{1}{2} \notag \\[1ex]
    & 
    a(k) + b(k) \le 2^{-k-1} \big(1 - \gamma\big)^{k-1} \big(1 + \gamma\big) && \forall 1 \le k \le \kmax \notag \\[2ex]
    &
    a(0) \ge \frac{\gamma}{2} \notag \\[1ex]
    &
    \sum_{0 \le \ell \le \kmax} a(\ell) \ge \Gamma \notag \\
    &
    \sum_{0 \le \ell < k} a(\ell) + 2 \cdot b(k) \ge \Gamma && \forall 0 \le k \le \kmax \notag \\
    &
    \sum_{0 \le \ell \le k} a(\ell) + \kappa \cdot b(k) \ge \Gamma && \forall 0 \le k \le \kmax  \notag \\
    & 
    a(k), b(k) \ge 0 && \forall 0 \le k \le \kmax \notag
\end{align}

See Table~\ref{tab:lp-solution} for an approximately optimal solution to the finite LP with $\gamma = \frac{1}{16}$, $\kappa = \frac{3}{2}$ and $\kmax = 7$, which gives $\Gamma > 0.505$.

Finally, we try different values of $\kappa = \frac{\ell}{16}$, $0 \le \ell \le 16$, with $\kmax = 7$.
If $\kappa = 0$ or $\kappa = 1$, $\Gamma = 0.5$;
if $\kappa = \frac{15}{16}$, $\Gamma \approx 0.5026$;
for all other values of $\kappa$, $\Gamma > 0.505$.
We conclude that the analysis is robust to the choice of $\kappa$, so long as it is neither too close to $1$ nor to $2$.

\clearpage

\begin{table}[t]
    \centering
    \renewcommand{\arraystretch}{1.2}
    \begin{tabular}{c|cc}
        \hline
        $k$ & $a(k)$ & $b(k)$ \\
        \hline
        $0$ & $0.24749974$ & $0.25250026$ \\
        $1$ & $0.13687460$ & $0.12875040$ \\
        $2$ & $0.06419862$ & $0.06031309$ \\
        $3$ & $0.03015109$ & $0.02821378$ \\
        $4$ & $0.01422029$ & $0.01313824$ \\
        $5$ & $0.00679622$ & $0.00602809$ \\
        $6$ & $0.00338141$ & $0.00262999$ \\
        $7$ & $0.00187856$ & $0.00093928$ \\
        \hline
    \end{tabular}
    \caption{An approximately optimal solution to the finite LP with $\kappa = \frac{3}{2}$, $\kmax = 7$, and $\gamma = \frac{1}{16}$, rounded to the $8$-th digit after the decimal point, whose $\Gamma \approx 0.50500053$.}
    \label{tab:lp-solution}
\end{table}

\section{Improved Online Correlated Selection}
\label{sec:improved-OCS}

This section gives a $\frac{1}{3\sqrt{3}}$-OCS, which leads to the competitive ratio stated in Theorem~\ref{thm:main}.
We present the algorithm in Section~\ref{sec:improved-OCS-algorithm}, and the analysis in Section~\ref{sec:improved-OCS-analysis}.
Finally, we show an approximately optimal solution to the LP in the previous section with the improved $\gamma = \frac{1}{3\sqrt{3}}$ in Section~\ref{sec:improved-OCS-LP-solution}.

\subsection{Algorithm}
\label{sec:improved-OCS-algorithm}

Given a fixed sequence of pairs of candidates, define the \emph{ex-ante dependence graph} to be a directed graph $G^{\exante} = (V, E^{\exante})$ as follows.
To make a distinction with the vertices and edges in the online matching problem, we will refer to the elements in $V$ as \emph{nodes}, and those in $E^{\exante}$ as \emph{arcs}.
Let there be a node in $V$ for each pair of candidates.
In the context of the edge-weighted online bipartite matching algorithm in this paper, each pair corresponds to an online vertex $j \in R$ matched in a randomized round, and the candidates are offline vertices.
Therefore, we let $j$ be the indices of pairs and $V \subseteq R$, and let $i$ be the indices of candidates.
The \emph{ex-ante} graph may be a multi-graph because, for example, there may be two consecutive pairs $j$ and $j+1$ that have the same $i_1$ and $i_2$, which lead to two parallel arcs from $j$ to $j+1$.
To this end, we will write an arc as $(j, j')_i$ where $i$ is the common candidate in rounds $j$ and $j'$ which leads to the arc.
See Figure~\ref{fig:exante} for an example.

The main ingredient of the improved OCS is an online algorithm that constructs the \emph{ex-post dependence graph}, denoted as $G^{\expost} = (V, E^{\expost})$, which is a subgraph of $G^{\exante}$.
To realize the \emph{ex-post} dependence graph, each node independently and randomly chooses one of the two types: \emph{sender} and \emph{receiver}.
The probability of being a sender, denoted as $p$, will be optimized later.

A sender node uses a fresh random bit to choose one of the two candidates, and is going to send the bit along the out-arcs in the \emph{ex-post} graph to introduce negative correlation.

A receiver node, on the other hand, seeks to read the random bit it receives and makes the opposite decision.
First, it checks both in-arcs in the \emph{ex-ante} graph to see if any of its in-neighbors is a sender.
If so, the receiver node reads the random bit realized in the round of such an in-neighbor;
if both in-neighbors are senders, choose one uniformly at random.
The corresponding arc is added to the \emph{ex-post} dependence graph.
Suppose a receiver node $j$ receives the random bit of a sender node $j'$ sent along arc $(j', j)_i$.
Then, the improved OCS chooses $i$ in round $j$ if it is not chosen in round $j'$, and vice versa.

\begin{figure}
    \centering
    \begin{subfigure}{\textwidth}
        \centering
        \includegraphics[width=.9\textwidth]{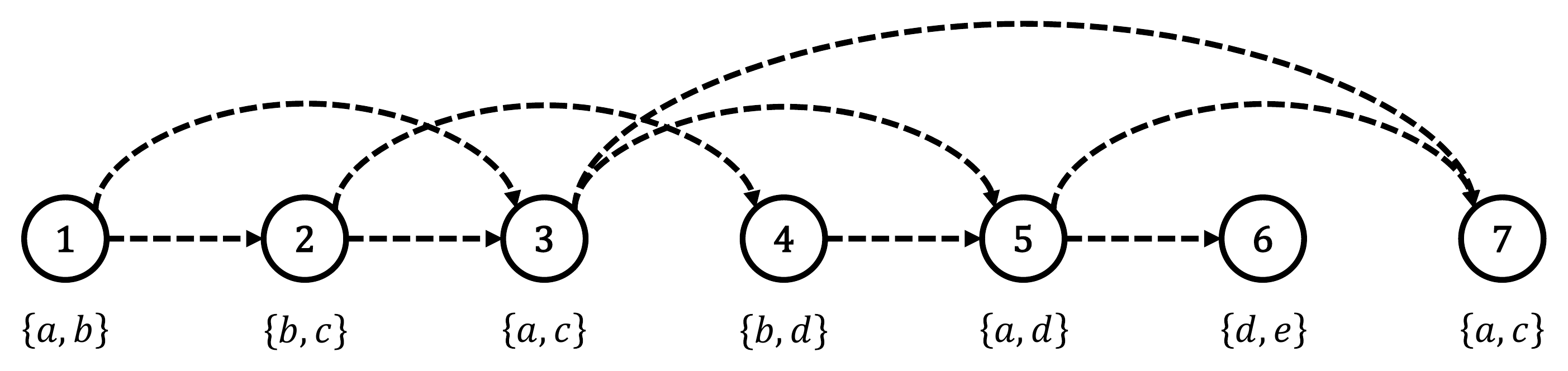}
        \caption{\emph{Ex-ante} dependence graph}
        \label{fig:exante}
    \end{subfigure}

    \begin{subfigure}{\textwidth}
        \centering
        \includegraphics[width=.9\textwidth]{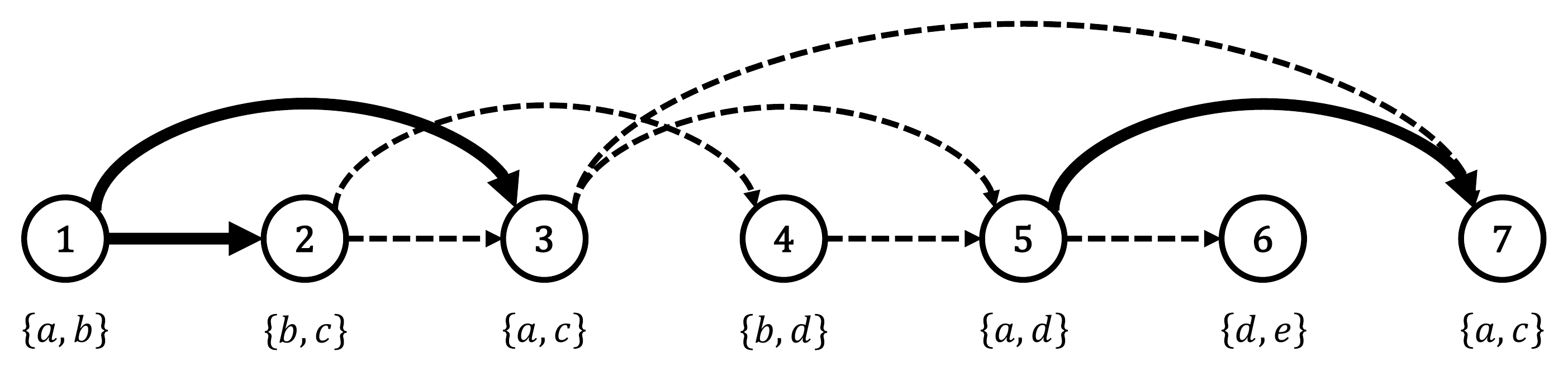}
        \caption{\emph{Ex-post} dependence graph (bold and solid edges). In this example, node $1$ draws $\sender$; nodes $2$ and $3$ draw $\receiver$; node $4$ draws $\sender$, but node $5$ also draws $\sender$ and thus, arc $(4, 5)_d$ is not realized; node $6$ draws $\sender$ and thus, arc $(5, 6)_d$ is not realized; node $7$ draws $\receiver$.}
        \label{fig:expost}
    \end{subfigure}
    \caption{An example of the \emph{ex-ante} and \emph{ex-post} dependence graphs for a sequence of $7$ pairs and $5$ different candidates $a$, $b$, $c$, $d$, and $e$.}
    \label{fig:dependence-graphs}
\end{figure}

\begin{algorithm}[t!]
    \caption{Improved Online Correlated Selection}
    \label{alg:improved-OCS}
    \begin{algorithmic}
        \medskip
        \STATEx \textbf{Parameter:}
        \begin{itemize}
            \item $p$ : probability that a node is a $\sender$.
        \end{itemize}
        \STATEx \textbf{State variables:}
        \begin{itemize}
            \item $G^{\text{\em ex-ante}} = (V, E^{\exante})$ : \emph{ex-ante} dependence graph; initially, $V = E^{\exante} = \emptyset$.
            \item $G^{\text{\em ex-post}} = (V, E^{\expost})$ : \emph{ex-post} dependence graph; initially, $V = E^{\expost} = \emptyset$.
            \item $\tau_j \in \big\{ \sender, \receiver \big\}$ for any $j \in V$.
        \end{itemize}
        \STATEx \textbf{On receiving $2$ candidate offline vertices $i_1$ and $i_2$ (for an online vertex $j \in R$):}
        \begin{enumerate}
            \item Add $j$ to $V$.
            \item For $k \in \{1, 2\}$, let $j_k$ be the last randomized round in which $i_k$ is a candidate;
                add an arc $(j_k, j)_{i_k}$ to $E^{\exante}$, i.e., the \emph{ex-ante} dependence graph.
            \item Let $\tau_j = \sender$ with probability $p$, and $\tau_j = \receiver$ with probability $1 - p$.
                \label{step:improved-OCS-type}
            \item If $\tau_j = \sender$, let $i^* = i_1$ or $i_2$, each with probability $\frac{1}{2}$.
                \label{step:improved-OCS-sender-choice}
            \item If $\tau_j = \receiver$:
            \begin{enumerate}
                \item If there is $j_m$, $m \in \{ 1, 2 \}$, so that $\tau_{j_m} = \sender$ (break ties randomly):
                    \label{step:improved-OCS-tie}
                \begin{enumerate}
                    \item Add an arc $(j_m, j)_{i_m}$ to $E^{\expost}$, i.e., the \emph{ex-post} dependence graph.
                    \item Let $i^* = i_m$ if $i_m$ is not chosen in round $j_m$, let $i^* = i_{-m}$ otherwise.
                \end{enumerate}
                \item Otherwise, let $i^* = i_1$ or $i_2$, each with probability $\frac{1}{2}$.
                    \label{step:improved-OCS-receiver-choice}
            \end{enumerate}
            \item Return $i^*$.
        \end{enumerate}
        \smallskip
    \end{algorithmic}
\end{algorithm}

See Algorithm~\ref{alg:improved-OCS} for a formal definition of the improved OCS.

\paragraph{Comparison with the Original OCS.}
A brief comparison with the original OCS defined in Algorithm~\ref{alg:OCS} is due.
The original one can also be interpreted in the language of the \emph{ex-ante} and \emph{ex-post} dependence graphs, as in its analysis by \citet{HuangT/arXiv/2019}.
It constructs the \emph{ex-post} dependence graph as follows:
on the arrival of each node $j$ with candidates $i_1$ and $i_2$, it picks one of its up to four incident arcs, i.e., the two in-arcs and two out-arcs due to $i_1$ and $i_2$, each with probability $\frac{1}{4}$;
it adds an arc $(j', j)_i$ to the \emph{ex-post} dependence graph if both nodes picks this arc, i.e., if $j'$ picks the out-arc due to $i$, and $j$ picks the in-arc due to $i$.
Then, the random bits are sent along the arcs in the \emph{ex-post} dependence graph the same way as in the improved OCS.

The main difference is that the original version constructs the \emph{ex-post} dependence graph to be a matching, while the improved version further allows v-structures in which a sender has two out-arcs to two receivers.
It is based on an observation that such v-structures do not introduce undesirable positive correlation in the matching decisions of the same candidate in different rounds.
We formalize it as the next lemma.

\begin{lemma}
    \label{lem:improved-OCS-CC-correlation}
    Suppose two rounds $j_1$ and $j_2$ have a common candidate $i$.
    Then:
    \begin{enumerate}
        \item If $j_1$ and $j_2$ belong to the same connected component in the \emph{ex-post} dependence graph, $i$'s matching decisions in $j_1$ and $j_2$ are the opposite.
        \item Otherwise, $i$'s matching decisions in $j_1$ and $j_2$ are independent.
    \end{enumerate}
\end{lemma}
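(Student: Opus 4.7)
The plan is to exploit the simple structure of the ex-post dependence graph $G^{\expost}$, which I would first show is a forest of stars: each component is either a singleton, a single sender-receiver edge, or a ``cherry'' consisting of one sender with two receiver children. This structural claim comes from three observations about Algorithm~\ref{alg:improved-OCS}. First, every ex-post arc is added in Step~\ref{step:improved-OCS-tie} with a sender as tail and a receiver as head, so senders have zero in-degree and receivers have zero out-degree in $G^{\expost}$. Second, a receiver picks at most one sender in-neighbor in that step, so receivers have ex-post in-degree at most one. Third, each node has at most two ex-ante out-arcs (one per candidate), so a sender has at most two ex-post out-arcs.

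From this structure, part~2 (independence across components) follows quickly. Every matching decision in a component is a deterministic function of the coins flipped by Algorithm~\ref{alg:improved-OCS} in rounds within that component---the type coin in Step~\ref{step:improved-OCS-type}, the sender's choice coin in Step~\ref{step:improved-OCS-sender-choice}, any tie-breaking coin in Step~\ref{step:improved-OCS-tie}, and, for an isolated receiver, the fallback coin in Step~\ref{step:improved-OCS-receiver-choice}. These coin collections are drawn independently across rounds and are disjoint for different components, so the indicators ``$i$ chosen in $j_1$'' and ``$i$ chosen in $j_2$'' depend on disjoint independent families of coins and are therefore independent.

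For part~1 I would argue via the sender's bit. Fix a component with sender $s$ and its coin $b \in \{1,2\}$ from Step~\ref{step:improved-OCS-sender-choice}. The receiver rule guarantees that for every ex-post arc $(s,r)_c$ and every candidate $i$ common to $s$ and $r$, the indicator ``$i$ chosen in $r$'' is the complement of ``$i$ chosen in $s$'': when $c=i$ the rule directly negates, and when $c\neq i$ the two-candidate arithmetic (both $s$ and $r$ have candidate sets $\{c,i\}$) reduces to the same complement. This immediately handles any pair of the form $(s,r)$. For the remaining sub-case, where both $j_1$ and $j_2$ are receivers of the same sender, I would compose the two complementary relationships through $s$, bringing in the structural constraint that the two ex-post arcs $(s,r_1)_{c_1}$, $(s,r_2)_{c_2}$ come from distinct ex-ante out-arcs of $s$ and are thus labelled by distinct candidates of $s$.

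The main obstacle will be precisely this receiver-receiver sub-case: the naive composition of two negations through $s$ would suggest equality rather than opposition, so the sign has to be pinned down by structural constraints rather than by the receiver rule alone. Concretely, I expect to have to exploit the ``last randomized round with candidate $c$'' rule defining the ex-ante graph---for example, to rule out configurations like $c_1=c_2=i$, and to control how the candidate sets of $r_1$ and $r_2$ can be compatible with both having $i$ as a candidate and both receiving from $s$ via $\{c_1,c_2\}\subseteq\{\text{candidates of }s\}$. Tabulating the admissible labelings of $(c_1,c_2)$ relative to $i$, discarding the structurally impossible ones, and checking the sign in each remaining case is where the bulk of the technical work will lie.
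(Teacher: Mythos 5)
Your plan has the same overall architecture as the paper's proof. Part~2 matches exactly: the paper argues that across components the decisions are ``effectively fresh random bits independent of each other,'' which is your disjoint-coin-family argument. For Part~1, the direct sender--receiver arc is likewise identical to the paper's ``first possibility.'' The paper also does not bother to state the star-forest structure explicitly, but it is used implicitly, and making it explicit as you do is a perfectly good first step.

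Where the two diverge is exactly the receiver--receiver v-structure you flag as the ``main obstacle,'' and your instinct there is sharper than the paper's own treatment. The paper writes the v-structure as: sender $j_s$ has candidates $\{i_1,i_2\}$, receiver $j_1$ has $\{i_1,i\}$, receiver $j_2$ has $\{i_2,i\}$, with arcs $(j_s,j_1)_{i_1}$ and $(j_s,j_2)_{i_2}$, and then composes a chain of five perfect negative correlations (odd count, hence opposition). But this presentation quietly \emph{assumes} $i\neq i_1$ and $i\neq i_2$, i.e.\ that $i$ is not a candidate of the sender, and that assumption is not justified and does not always hold. If $i$ is also a candidate of $j_s$, the ``last round with candidate $c$'' rule forces the arc from $j_s$ to whichever receiver is next in the $i$-sequence to be labelled by $i$ itself, the chain shortens to four links, and the parity flips to equality. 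Concretely, take $j_s=\{i,b\}$, $j_1=\{i,x\}$ with $x\notin\{i,b\}$, $j_2=\{i,b\}$, $j_s<j_1<j_2$, with ex-post arcs $(j_s,j_1)_i$ and $(j_s,j_2)_b$ (this is realizable, e.g.\ rounds $\{x,y\},\{i,b\},\{i,x\},\{i,b\}$ with $j_s$ a sender and $j_1,j_2,\hat{j}$ receivers). By the receiver rule, $i$ is chosen in $j_1$ iff $b$ is chosen in $j_s$, and $i$ is chosen in $j_2$ iff $b$ is chosen in $j_s$ as well, so the two decisions \emph{agree}. Hence Part~1 of the lemma fails in this configuration, and the tabulation of labelings you propose --- rather than merely ``pinning down the sign'' --- would turn up a case the paper's proof omits. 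So this is not a gap you can close by ruling the bad labeling out structurally; the bad labeling is genuinely reachable, and the statement as written does not hold for it. Any completed proof along your lines (or a repair of the paper's argument) would have to either weaken the lemma to ``$i$ is chosen in at least one of $j_1$, $j_2$, or in the sender $j_s$,'' or impose an extra condition on $j_1,j_2$ that excludes the case where the connecting sender shares the candidate $i$.
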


\begin{proof}
    \emph{(Part 1)~}
    There are two possibilities.
    First, there may be an arc between $j_1$ and $j_2$ in the \emph{ex-post} dependence graph. 
    Then, regardless of whether the arc is due to $i$ or another common candidate, the matching decisions are the opposite by definition.
    See candidate $a$ and nodes $1$ and $3$ in Figure~\ref{fig:expost} for an example.

    Second, $j_1$ and $j_2$ may be the receivers of the same sender $j_s$ in a v-structure.
    That is, $(j_s, j_1)$ and $(j_s, j_2)$ are both in the \emph{ex-post} graph; 
    suppose $i_1$ and $i_2$ are the candidates in $j_s$, $i_1$ and $i$ are the candidates in $j_1$, and $i_2$ and $i$ are the candidates in $j_2$.
    We still get that $i$'s matching decisions in $j_1$ and $j_2$ are perfectly negatively correlated, due to a sequence of perfect negative correlation: $i$ and $i_1$'s matching decisions in $j_1$, $i_1$'s matching decisions in $j_s$ and $j_1$, $i_1$ and $i_2$'s matching decisions in $j_s$, $i_2$'s matching decisions in $j_s$ and $j_2$, and $i_2$ and $i$'s matching decisions in $j_2$.
    See candidate $c$ and nodes $2$ and $3$ in Figure~\ref{fig:expost} for an example.

    \emph{(Part 2)~} 
    This follows by the definition of the improved OCS.
    Even if one or both of $j_1$ and $j_2$ are receiver nodes, from $i$'s viewpoint the matching decisions are still effectively fresh random bits independent of each other; 
    the bits are merely drawn earlier by some sender nodes and sent along the arcs.
    See candidate $c$ and nodes $3$ and $7$ in Figure~\ref{fig:expost} for an example.
    Even though $c$'s matching decisions are based on random bits drawn in nodes $1$ and $5$ respectively, they are independent decisions from $c$'s viewpoint.
\end{proof}

The main property of the improved OCS is the next lemma, whose proof is deferred to to next subsection.

\begin{lemma}
    \label{lem:improved-OCS}
    For any fixed sequence of pairs of candidates, any fixed candidate $i$, and any disjoint sequences of consecutive rounds of lengths $k_1, k_2, \dots, k_m \ge 1$ in which $i$ is involved, the improved OCS in Algorithm~\ref{alg:improved-OCS} ensures that $i$ is chosen in at least one of the rounds with probability at least:
    \[
        1 - \prod_{\ell = 1}^m 2^{-k_\ell} \cdot g_{k_\ell} 
        ~,
    \]
    where $g_k$'s are defined recursively as follows:
    \begin{equation}
        \label{eqn:improved-OCS-recurrence}
        g_k = \begin{cases}
            1 & k = 0, 1 ~; \\
            g_{k-1} - p \big( 1-p \big) \big( 1 - \tfrac{p}{2} \big) \cdot g_{k-2} & k \ge 2 ~.
        \end{cases}
    \end{equation}
\end{lemma}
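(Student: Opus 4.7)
The plan is to reduce the multi-sequence statement to the single-sequence case ($m=1$) and prove the latter by induction on the sequence length $k$. Throughout, write $Y_\ell\in\{0,1\}$ for the indicator that $i$ is chosen at round $j_\ell$, so the target is to upper bound $q_k := \Pr[Y_1=\cdots=Y_k=0]$ by $2^{-k}g_k$ for a single sequence and then multiply across disjoint sequences.

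The heart of the argument is an observation about ex-post $i$-arcs within a sequence. Whenever the arc $(j_{\ell-1},j_\ell)_i$ appears in the ex-post graph, the algorithm forces $Y_\ell=1-Y_{\ell-1}$ (the direct-arc half of Lemma~\ref{lem:improved-OCS-CC-correlation}), so the all-zero event is impossible. Thus $q_k$ collapses to $2^{-k}\cdot h_k$, where $h_k$ is the probability that no ex-post $i$-arc sits inside the sequence and the factor $2^{-k}$ reflects that, conditional on this event, each $Y_\ell$ is driven by a fresh fair bit---its own coin if $j_\ell$ is a $\sender$ or an orphaned $\receiver$, or the bit of another $\sender$ routed through a non-$i$ ex-post arc---with distinct rounds drawing from independent bits. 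To bound $h_k$, I note that the arc at position $\ell$ requires $\tau_{j_{\ell-1}}=\sender$, $\tau_{j_\ell}=\receiver$, and a favorable tie-break; in the worst case, when the other ex-ante predecessor of $j_\ell$ is also a $\sender$, the tie-break succeeds with probability $1-p/2$, so the arc is present with probability at most $c:=p(1-p)(1-p/2)$. Since two adjacent arcs impose contradictory types on the shared round, they are mutually exclusive. Conditioning on $\tau_{j_k}$ (and then on $\tau_{j_{k-1}}$) and carrying out elementary algebra yields the recurrence $h_k=h_{k-1}-c\,h_{k-2}$, matching the recurrence defining $g_k$; hence $q_k\le 2^{-k}g_k$.

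To extend to $m$ disjoint sequences $S_1,\ldots,S_m$, I use Lemma~\ref{lem:improved-OCS-CC-correlation} in the other direction: for any two rounds drawn from different sequences, their $Y$-values are either independent (different ex-post components) or perfectly negatively correlated (same component). In both scenarios the events ``all $Y=0$ on $S_\ell$'' combine sub-multiplicatively, giving the claimed product bound $\prod_{\ell=1}^{m} 2^{-k_\ell}\,g_{k_\ell}$ on the probability that $i$ is missed everywhere in $S_1\cup\cdots\cup S_m$. The main obstacle is precisely the ``fresh independent fair coins'' claim used to reduce $q_k$ to $2^{-k}h_k$: one must verify that conditioning on the absence of ex-post $i$-arcs within the sequence does not secretly introduce positive correlation among the $Y_\ell$'s through non-$i$ ex-post arcs shared via a common sender ancestor. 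Handling this cleanly relies on the star-shaped structure of ex-post components---each $\sender$ has at most one out-arc per candidate---which, after a careful case analysis, rules out configurations that would create positive correlation and thereby preserves the factor of $2^{-k}$ in the single-sequence bound.
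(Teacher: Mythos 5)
Your single-sequence reduction tracks the event ``no ex-post $i$-arc inside the sequence'' and calls its probability $h_k$, hoping to show $h_k \le g_k$. This is a genuine gap: $h_k$ can strictly exceed $g_k$ once the companion nodes $\hat{j}_s$ (the other ex-ante in-neighbors of $j_s$) coincide. Concretely, take $k=4$ with $\hat{j}_2 = \hat{j}_4 = \hat{j}$, a single node outside the sequence; the $i$-arcs at positions $2$ and $4$ both ``want'' $\hat{j}$ to be a receiver or to lose the tie-break, so they become positively correlated. Computing, $\Pr[A_2 \cap A_4] = p^2(1-p)^2(1-\tfrac{3p}{4})$, which is strictly larger than the regular-case value $c^2 = p^2(1-p)^2(1-\tfrac{p}{2})^2$; plugging into inclusion--exclusion gives $h_4 = 1 - 3c + \Pr[A_2\cap A_4] > 1 - 3c + c^2 = g_4$. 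So bounding the per-arc marginal by $c$ ``in the worst case'' does not suffice; the joint distribution of $i$-arcs across positions can be more favorable to ``no arc'' than in the independent model, and your recurrence derivation (which tacitly assumes independence across non-adjacent positions) breaks.

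The paper closes exactly this hole by not working with ``no $i$-arc'' but with the relaxed event $G$: in addition to precluding $i$-arcs between sequence nodes, $G$ also precludes two sequence nodes (past the first position) sitting as co-receivers of the same sender, i.e., v-structures. This extra exclusion is precisely what compensates for the positive correlation in the scenario above: when $\hat{j}$ sends to both $j_2$ and $j_4$, the v-structure kills $G$ even though no $i$-arc is present. The paper then shows (Lemma~\ref{lem:improved-OCS-irregular}) via a stochastic-dominance argument on the indicators $r^\ell_s$ that the ``regular'' configuration---all $\hat{j}_s$ existing and pairwise distinct from each other and from the sequence---maximizes $\Pr[G]$, and computes $\Pr[G]$ in that case to be exactly $g_k$ via the auxiliary quantities $g_k, a_k$ you sketch. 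Your closing remark gestures at the right concern (cross-correlation through a shared sender), but the proposed resolution (appeal to ``star-shaped structure'' to rule out positive correlation among the $Y_\ell$'s) addresses a different and in fact non-problematic direction: shared senders induce \emph{negative} correlation on the $Y_\ell$'s, which helps the bound $q_k \le 2^{-k}h_k$; the actual failure is that $h_k$ itself overshoots $g_k$. The multi-sequence step also inherits this imprecision: negative pairwise correlation does not by itself yield sub-multiplicativity of the joint miss event, whereas the paper's exact identity $\Pr[\text{never chosen}] = 2^{-\sum_\ell k_\ell}\Pr[F]$ and the containment $F \subseteq G$ avoid the issue entirely.
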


\begin{lemma}
    The improved OCS in Algorithm~\ref{alg:improved-OCS} is a $p \big( 1-p \big) \big( 1 - \tfrac{p}{2} \big)$-OCS.
\end{lemma}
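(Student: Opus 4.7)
The plan is to deduce this from Lemma~\ref{lem:improved-OCS}, which gives exactly the desired form of bound but with $\prod_\ell 2^{-k_\ell} g_{k_\ell}$ in place of $\prod_\ell 2^{-k_\ell}(1-\gamma)^{k_\ell-1}$. It therefore suffices, multiplying termwise, to establish the scalar inequality $g_k \le (1-\gamma)^{k-1}$ for every $k \ge 1$, where $\gamma = p(1-p)(1-p/2)$.

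For this I would argue in direct parallel with the paper's derivation of $f_k \le (1 - \tfrac{1}{16})^{k-1}$ from the $f_k$ recurrence. From $g_k = g_{k-1} - \gamma g_{k-2}$ and $g_{k-2} \ge 0$ one immediately gets $g_k \le g_{k-1}$; substituting the resulting monotonicity $g_{k-1} \le g_{k-2}$ back into the recurrence yields $g_k \le g_{k-1} - \gamma g_{k-1} = (1-\gamma) g_{k-1}$, and iterating from $g_1 = 1$ gives the desired exponential bound.

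The only input that requires genuine verification, and the main obstacle, is the non-negativity $g_k \ge 0$ used at every step of the monotonicity argument. I would handle it via the closed form of the linear recurrence: since $\gamma = p(1-p)(1-p/2)$ is maximized over $p \in [0,1]$ at $p = 1 - 1/\sqrt{3}$ with value $1/(3\sqrt{3}) < 1/4$, the characteristic polynomial $x^2 - x + \gamma$ has real roots $r_1 > r_2$ in $(0,1)$ satisfying $r_1 + r_2 = 1$ and $r_1 r_2 = \gamma$. A short calculation using the initial conditions $g_0 = g_1 = 1$ gives $g_k = (r_1^{k+1} - r_2^{k+1})/(r_1 - r_2)$, which is manifestly positive. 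Combined with the monotonicity-iteration step, this completes the proof of $g_k \le (1-\gamma)^{k-1}$, and the lemma then follows by one-line substitution into Lemma~\ref{lem:improved-OCS}.
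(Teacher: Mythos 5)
Your proof is correct and takes essentially the same route as the paper: both derive $g_k \le (1-\gamma) g_{k-1}$ from the recurrence by first observing monotonicity $g_{k-1} \le g_{k-2}$ and then substituting, iterating from $g_1 = 1$. The one place you add something is the explicit justification of $g_k \ge 0$ via the closed form $g_k = (r_1^{k+1}-r_2^{k+1})/(r_1-r_2)$ with $r_1, r_2 \in (0,1)$ the roots of $x^2-x+\gamma$, using $\gamma = p(1-p)(1-p/2) \le 1/(3\sqrt{3}) < 1/4$ so the roots are real. The paper's proof states $g_k \le g_{k-1}$ without comment, implicitly relying on non-negativity of earlier $g$'s; that fact is in principle available for free because $g_k$ is later identified (Lemma~\ref{lem:improved-OCS-regular}) as the probability of an event, but your self-contained argument from the recurrence alone is a clean and arguably more careful way to close the loop without a forward reference.
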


\begin{proof}
    By the recurrence in Eqn.~\eqref{eqn:improved-OCS-recurrence}, we have $g_k \le g_{k-1}$, which further implies for $k \ge 2$:
    \begin{align*}
        g_k & = g_{k-1} - p \big( 1-p \big) \big( 1 - \tfrac{p}{2} \big) \cdot g_{k-2} \\
        & \le \big( 1 - p \big( 1-p \big) \big( 1 - \tfrac{p}{2} \big) \big) \cdot g_{k-1} ~.
    \end{align*}
    
    Putting together with $g_1 = 1$ and Lemma~\ref{lem:improved-OCS} proves the lemma.
\end{proof}

Further, let $p = 1 - \frac{1}{\sqrt{3}}$ to maximize $p \big( 1-p \big) \big( 1 - \tfrac{p}{2} \big) = \frac{1}{3\sqrt{3}}$.

\begin{lemma}
    \label{lem:improved-OCS-gamma}
    The improved OCS in Algorithm~\ref{alg:improved-OCS} is a $\frac{1}{3\sqrt{3}}$-OCS with $p = 1 - \frac{1}{\sqrt{3}}$.
\end{lemma}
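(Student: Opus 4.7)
The plan is to establish the single-sequence version first and then extend to multiple disjoint sequences in the style of how Lemma~\ref{lem:OCS-generalized} extends Lemma~\ref{lem:OCS}. The single-sequence goal is: for $k$ consecutive $i$-rounds $j_1 < \cdots < j_k$, the probability that $i$ is never chosen in any of them is at most $2^{-k} g_k$.

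The main tool is the family of events $A_t$, for $2 \le t \le k$, defined by ``the arc $(j_{t-1}, j_t)_i$ is realized in the \emph{ex-post} dependence graph.'' By Lemma~\ref{lem:improved-OCS-CC-correlation}, whenever $A_t$ holds, $i$'s decisions in $j_{t-1}$ and $j_t$ are perfect opposites, so the bad event fails. Hence the bad event forces every $A_t$ to fail. A direct calculation from the algorithm's specification gives $\Pr[A_t] = p(1-p)(1-p/2) = q$: a factor $p$ for $\tau_{j_{t-1}} = \sender$, $1-p$ for $\tau_{j_t} = \receiver$, and $1-p/2$ for the tiebreak at $j_t$ choosing $j_{t-1}$ over the other in-neighbor $j'_t$ (certain when $\tau_{j'_t} = \receiver$, probability $\tfrac{1}{2}$ when $\tau_{j'_t} = \sender$).

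I would then derive the bound $\Pr[\bigcap_t \overline{A_t}] \le g_k$ by induction. The key structural observations are that consecutive $A_{t-1}$ and $A_t$ are mutually exclusive (they impose conflicting values $\receiver$ versus $\sender$ on $\tau_{j_{t-1}}$), while non-adjacent $A$-events are independent through disjoint tiebreak coins and external in-neighbor types; conditioning on $\tau_{j_k}$ and, when needed, $\tau_{j_{k-1}}$ yields the target recurrence $g_k = g_{k-1} - q\, g_{k-2}$. Moreover, conditional on all $A_t$ failing, each indicator $X_t$ that $i$ is chosen in $j_t$ becomes a fresh uniform bit independent of the other $X_s$'s: either $j_t$ is a sender and $X_t$ is its own coin, or $j_t$ is a receiver whose choice is driven by the fresh decision of an in-neighbor regarding a candidate other than $i$, which by Lemma~\ref{lem:improved-OCS-CC-correlation} part~2 is independent of the sequence. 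Multiplying yields $\Pr[\text{bad event}] \le 2^{-k} g_k$.

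The main technical obstacle lies in degenerate configurations where two rounds $j_t, j_s$ in the sequence share the same external in-neighbor $j'$ (necessarily a round whose two candidates are the ``other'' candidates $i'_t$ and $i'_s$ of $j_t$ and $j_s$). In such a case the events $A_t$ and $A_s$ are positively correlated, so the probability that both fail can exceed $(1-q)^2$; however, the same sharing creates compensating negative correlation among the $X$'s, since when $\tau_{j'} = \sender$ and both tiebreaks route through $j'$ the identity $X_t + X_s = 1$ holds deterministically, making the joint event $\{X_t = X_s = 0\}$ impossible and preserving the overall bound. The multi-sequence extension then proceeds along the same template as the Huang--Tao proof of Lemma~\ref{lem:OCS-generalized}, again using Lemma~\ref{lem:improved-OCS-CC-correlation} part~2 to decouple contributions from different sequences, with the original arc probability $\tfrac{1}{16}$ replaced throughout by the improved $q = p(1-p)(1-p/2)$.
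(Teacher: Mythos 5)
The core ideas of your proposal align with the paper's: focusing on the arcs $(j_{t-1}, j_t)_i$ in the \emph{ex-post} dependence graph, computing their realization probability as $q = p(1-p)(1-\tfrac{p}{2})$, deriving the recurrence $g_k = g_{k-1} - q\,g_{k-2}$ by a two-state conditioning on whether $j_k$ (and, through an auxiliary quantity, $j_{k-1}$) is a sender, and noting that the bad event forces every such arc to fail. These are all present in the paper, where they are organized through the events $F$ (nodes in distinct components of $G^{\expost}$) and its relaxation $G$, the exact identity $\Pr[\text{bad}] = 2^{-\sum k_\ell}\Pr[F]$, and the regular-case computation $\Pr[G]=\prod g_{k_\ell}$ via a coupled system in $(g_k, a_k)$.

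There is, however, a genuine gap in how you propose to dispose of the degenerate configurations. Your event ``all $A_t$ fail'' is strictly larger than the paper's event $G$: $G$ also precludes v-structures in which two sequence nodes $j^\ell_s$, $j^{\ell'}_{s'}$ with $s,s'>1$ are both receivers of the same sender, and arcs between nodes of different sequences. In the type-3 irregular case ($\hat{j}^\ell_s = \hat{j}^{\ell'}_{s'}$), the events $A_t$ are positively correlated and $\Pr[\text{all } A_t \text{ fail}]$ can strictly exceed $\prod_\ell g_{k_\ell}$, so the clean chain $\Pr[\text{bad}] \le 2^{-\sum k_\ell}\Pr[\text{all }A_t\text{ fail}] \le 2^{-\sum k_\ell}\prod_\ell g_{k_\ell}$ breaks at the second step. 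Your suggested fix---that the shared node induces a compensating negative correlation among the $X_t$'s---is a true observation, but it lives at the $X$-stage rather than the $A$-stage and therefore does not repair the second inequality; it needs to be woven back into the combinatorial bound, which is exactly what the paper does by (i) replacing ``all $A_t$ fail'' with $G$, which declares a v-structure as already fatal, and (ii) proving $\Pr[G]$ is maximized in the regular case by defining indicators $r^\ell_s$ for ``the arc $(j^\ell_{s-1}, j^\ell_s)_i$ is blocked without creating a forbidden connection elsewhere,'' showing $G$ is monotone in the $r$'s, and establishing that the regular joint distribution of the $r$'s stochastically dominates each irregular one. That dominance argument is the piece your proposal is missing.

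Two smaller points. First, the assertion that conditional on all $A_t$ failing the $X_t$'s are fresh independent uniforms is not correct as stated---it holds conditional on $F$ (distinct components), and what saves you outside $F$ is that two connected sequence-nodes have deterministically opposite decisions so the bad event is impossible; the paper's clean route is the equality $\Pr[\text{bad}] = 2^{-\sum k_\ell}\Pr[F]$ followed by $\Pr[F]\le\Pr[G]$. Second, $\Pr[A_t] = q$ only holds when $\hat{j}_t$ exists and is distinct from the sequence nodes (the regular case); when $\hat{j}_t$ is absent the probability is the larger $p(1-p)$, which happens to help but should be accounted for rather than assumed away.
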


\subsection{Analysis: Proof of Lemma~\ref{lem:improved-OCS}}
\label{sec:improved-OCS-analysis}

Let $j^\ell_1 < j^\ell_2 < \dots < j^\ell_{k_\ell}$, $1 \le \ell \le m$, be the sequences of consecutive rounds in which $i$ is a candidate.
The algorithm uses three kinds of independent random bits:
\begin{itemize}
    \item The random type of each round, realized in Step~\ref{step:improved-OCS-type}; 
        recall that it is denoted as $\tau_j$, which equals either $\sender$ or $\receiver$.
    \item The tie-breaking in Step~\ref{step:improved-OCS-tie};
        we will denote it as $m_j$.
    \item The random choice between $i_1$ and $i_2$ in Step~\ref{step:improved-OCS-sender-choice} or in Step~\ref{step:improved-OCS-receiver-choice} (each round reaches at most one of these two steps); 
        we will denote it as $c_j$. 
\end{itemize}

For convenience of the argument, we will assume that $m_j$'s and $c_j$'s are realized in all rounds, regardless of whether the algorithm reaches the corresponding steps;
they are merely left unused sometimes.

Further, the realization of the \emph{ex-post} dependence graph only relies on the first two kinds of random bits.
Let $F$ denote the event that no two nodes in the sequences belong to the same connected component in the \emph{ex-post} dependence graph, i.e.:
\[
    F = \bigg\{ (\tau, m) : \textrm{$j^\ell_s$, $1 \le \ell \le m$, $1 \le s \le k_\ell$, are in different connected components in $G^{\expost}$} \bigg\}
    ~.
\]

\begin{lemma}
    The probability that $i$ is never chosen is:
    \[
        2^{- \sum_{1 \le \ell \le m} k_\ell} \cdot \Pr \big[ F \big]
        ~.
    \]
\end{lemma}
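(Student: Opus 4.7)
The plan is to condition on the event $F$ and use the dichotomy provided by Lemma~\ref{lem:improved-OCS-CC-correlation} to evaluate the two conditional probabilities separately. Writing
\[
    \Pr\bigl[\, i \text{ never chosen} \,\bigr] = \Pr[F] \cdot \Pr\bigl[\, i \text{ never chosen} \,\bigm|\, F \,\bigr] + \Pr[F^c] \cdot \Pr\bigl[\, i \text{ never chosen} \,\bigm|\, F^c \,\bigr],
\]
the target identity will follow immediately once we show $\Pr[\, i \text{ never chosen} \mid F^c \,] = 0$ and $\Pr[\, i \text{ never chosen} \mid F \,] = 2^{-\sum_\ell k_\ell}$.

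For the first conditional probability, observe that if $F^c$ holds, then by definition there exist two nodes $j^\ell_s$ and $j^{\ell'}_{s'}$ in the sequences that lie in the same connected component of $G^{\expost}$. Both rounds have $i$ as a candidate, so Part~1 of Lemma~\ref{lem:improved-OCS-CC-correlation} applies and forces $i$'s matching decisions in these two rounds to be opposite. In particular, $i$ is chosen in at least one of the two, so the event ``$i$ never chosen'' cannot occur on $F^c$.

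For the second conditional probability, I would unpack the algorithm's source of randomness as decomposed at the start of Section~\ref{sec:improved-OCS-analysis}: the pair $(\tau, m)$ determines the ex-post graph (and hence whether $F$ holds) and is independent of the fresh bits $c_j$. Conditional on any realization of $(\tau, m)$ in $F$, each node $j^\ell_s$ lies in its own connected component $C^\ell_s$ of $G^{\expost}$, and the matching decision in $j^\ell_s$ is a deterministic function of the single bit $c_{j^\star}$, where $j^\star$ is the unique sender node in $C^\ell_s$ (for a sender node, $j^\star = j^\ell_s$ itself). Since the sender nodes involved are all distinct, the corresponding bits $c_{j^\star}$ are mutually independent and uniform, so the decisions in the rounds $j^\ell_s$ are mutually independent and uniform. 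Hence $i$ is avoided in any given round with probability exactly $\frac{1}{2}$, and the joint probability of avoidance in all $\sum_\ell k_\ell$ rounds is $2^{-\sum_\ell k_\ell}$; this bound does not depend on the particular realization of $(\tau, m)$ in $F$, so it survives unconditioning over $F$.

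The main obstacle is the mutual (rather than merely pairwise) independence step in the second case, since Part~2 of Lemma~\ref{lem:improved-OCS-CC-correlation} is stated only for two rounds at a time. The clean way to handle this is to argue at the level of the random bits $c_{j^\star}$ of the distinct sender-anchors of the distinct connected components, as sketched above, rather than applying Part~2 pairwise.
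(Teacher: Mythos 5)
Your proof is correct and follows essentially the same strategy as the paper: a two-stage realization in which the $(\tau, m)$ bits determine the \emph{ex-post} graph (and hence whether $F$ holds), followed by conditioning, with Part~1 of Lemma~\ref{lem:improved-OCS-CC-correlation} killing the $F^c$ contribution and the remaining fresh bits $c_j$ giving the $2^{-\sum_\ell k_\ell}$ factor on $F$. Your explicit reduction to the distinct anchor bits $c_{j^\star}$ (one per connected component, which in $G^{\expost}$ is always a star rooted at the unique sender or at the isolated receiver) is a useful sharpening of the paper's appeal to the pairwise Part~2 of that lemma, but it is the same underlying argument.
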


\begin{proof}
    Consider the realization of the randomness in two stages.
    First, realize the \emph{ex-post} dependence graph by drawing the first two kinds of random bits.
    Then, draw the third kinds of random bits, in particular, for every node with no in-arcs in the \emph{ex-post} dependence graph, i.e., senders, and receivers whose in-neighbors are not senders.

    If there are two nodes in the sequences that belong to the same connected component in the \emph{ex-post} dependence graph, $i$ must be chosen in exactly one of the two rounds by the first part of Lemma~\ref{lem:improved-OCS-CC-correlation}.
    Otherwise, the second part of Lemma~\ref{lem:improved-OCS-CC-correlation} states that $i$'s matching decisions in these rounds are independent. 

    Hence, the probability that $i$ is never chosen is equal to the product of (1) the probability that the nodes in the sequences belong to $\sum_{\ell = 1}^m k_\ell$ different connected components in the \emph{ex-post} dependence graph, and (2) all $\sum_{\ell = 1}^m k_\ell$ independent random bits are against $i$.
    The former is $\Pr \big[ F \big]$.
    The latter is equal to $2^{-\sum_{\ell = 1}^m k_\ell}$.
\end{proof}

Therefore, it remains to show that:
\begin{equation}
    \label{eqn:improved-OCS-cc-bound}
    \Pr \big[ F \big] \le \prod_{\ell = 1}^m g_{k_\ell} 
    ~.
\end{equation}

What are the arcs of concern in event $F$?
By the assumption that these are sequences of consecutive rounds in which $i$ is a candidate, the arcs of the form $(j^\ell_s, j^\ell_{s+1})$ always exist in the \emph{ex-ante} dependence graph.
To characterize whether some of these arcs are realized in the \emph{ex-post} graphs, we need to further consider another set of arcs as follows.
%These arcs are particularly important because they always exist in the \emph{ex-ante} dependence graph.

For any $1 \le \ell \le m$, consider the in-arcs of nodes $j^\ell_2 < \dots < j^\ell_{k_\ell}$ in the \emph{ex-ante} dependence graph, other than $(j^\ell_1, j^\ell_2)_i, \dots, (j^\ell_{k_\ell-1}, j^\ell_{k_\ell})_i$.
Let them be $(\hat{j}^\ell_s, j^\ell_s)$, $1 < s \le k_\ell$.
We omit the subscript that denotes the common candidate in the two nodes, with the understanding that they are due to the candidate other than $i$ in the round of $j^\ell_s$.

Then, the realization of $\tau_j$'s for nodes $j^\ell_s$'s and $\hat{j}^\ell_s$'s, and the realization of $m_j$'s for nodes $j^\ell_s$'s determine whether the arcs of the form $(j^\ell_s, j^\ell_{s+1})_i$ are realized.
Concretely, an arc $(j^\ell_s, j^\ell_{s+1})_i$ is realized in the \emph{ex-post} graph if and only if:
\begin{enumerate}
    \item Node $j^\ell_s$ is a sender (realization of $\tau_{j^\ell_s}$).
    \item Node $j^\ell_{s+1}$ is a receiver (realization of $\tau_{j^\ell_{s+1}}$).
    \item Either node $\hat{j}^\ell_{s+1}$ is a receiver (realization of $\tau_{\hat{j}^\ell_{s+1}}$), or the tie-breaking is in favor of $j^\ell_s$ (realization of $m_{j^\ell_{s+1}}$).
\end{enumerate}

\paragraph{Relaxed Event.}
In the rest of the argument, we will consider a slightly relaxed event $G$, which only precludes nodes in the sequences from being connected in the \emph{ex-post} dependence graph in the following forms:
\begin{itemize}
    \item An arc from $j^\ell_s$ to $j^{\ell'}_{s'}$, \emph{and $s' > 1$}.
    \item A v-structure with $j^\ell_s$ and $j^{\ell'}_{s'}$ as the receivers of the same sender, \emph{and $s, s' > 1$}.
\end{itemize}

By definition, we have $F \subseteq G$ and therefore:
\begin{equation}
    \label{eqn:improved-OCS-relaxed-event}
    \Pr \big[ F \big] \le \Pr \big[ G \big]
    ~.
\end{equation}

\paragraph{Regular Case.}
The special case when all $\hat{j}^\ell_s$'s exist, and all $j^\ell_s$'s and $\hat{j}^\ell_s$'s are distinct nodes is of particular importance.
It is relatively easy to analyze because in this special case it suffices to consider arcs of the form $(j^\ell_s, j^\ell_{s+1})_i$, and (2) different sequences may be analyze separately as the relevant nodes and the corresponding random bits are independent.
We will refer to it as the \emph{regular} case.
We will analyze the regular case in Lemma~\ref{lem:improved-OCS-regular}, and will show in Lemma~\ref{lem:improved-OCS-irregular} that this is the worst-case scenario that maximizes $\Pr \big[ G \big]$.

\begin{lemma}
    \label{lem:improved-OCS-regular}
    In the regular case, i.e., when all $\hat{j}^\ell_s$'s exist, and all $j^\ell_s$'s and $\hat{j}^\ell_s$'s are distinct, the probability of event $G$ is: 
    \[
        \Pr \big[ G \big] = \prod_{\ell = 1}^m g_{k_\ell} 
        ~.
    \]
\end{lemma}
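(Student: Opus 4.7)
The plan is to first reduce event $G$ to a statement about a specific family of \emph{ex-ante} arcs using the regular-case distinctness, and then to compute the resulting probability by a Markovian recursion on sequence length.

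In the regular case the $j^\ell_s$'s and $\hat{j}^\ell_s$'s are pairwise distinct, so the only in-neighbors of any $j^\ell_s$ with $s>1$ in the \emph{ex-ante} graph are $j^\ell_{s-1}$ (via candidate $i$) and $\hat{j}^\ell_s$ (via the other candidate). Consequently no $\hat{j}$ lies in any sequence, and no two distinct sequence nodes share a common in-neighbor; hence neither a cross-sequence arc in $G^{\expost}$ nor a v-structure connecting two sequence nodes can occur, and event $G$ reduces to requiring that none of the arcs $(j^\ell_s, j^\ell_{s+1})_i$ is realized. Moreover, the random bits $\tau_{j^\ell_s}$, $\tau_{\hat{j}^\ell_s}$, and $m_{j^\ell_s}$ governing realization within sequence $\ell$ are disjoint across sequences, so $\Pr[G] = \prod_\ell h_{k_\ell}$, where $h_k$ denotes the single-sequence probability.

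To compute $h_k$ I would condition on the type of the last node $j_k$. Setting $h_k^{(S)} = \Pr[\text{no arc realized} \mid \tau_{j_k} = \sender]$ and $h_k^{(R)}$ analogously, I first observe that $h_k^{(S)} = h_{k-1}$, since the last arc requires $\tau_{j_k} = \receiver$ and the remaining randomness is independent of $\tau_{j_k}$. For $h_k^{(R)}$ I condition further on $\tau_{j_{k-1}}$: if $\tau_{j_{k-1}} = \receiver$ (probability $1-p$) the last arc is automatically absent and the leftover problem on $j_1, \ldots, j_{k-1}$ with the last node known to be a receiver contributes $h_{k-1}^{(R)}$; if $\tau_{j_{k-1}} = \sender$ (probability $p$) the arc $(j_{k-2}, j_{k-1})_i$ is ruled out, while the last arc $(j_{k-1}, j_k)_i$ fails to realize exactly when $\tau_{\hat{j}_k} = \sender$ \emph{and} the tie-breaker $m_{j_k}$ favors $\hat{j}_k$ (conditional probability $p/2$), leaving an independent factor $h_{k-2}$. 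Hence
\[
h_k^{(R)} = (1-p)\, h_{k-1}^{(R)} + \tfrac{p^2}{2}\, h_{k-2}.
\]

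Finally, from $h_{k-1} = p\, h_{k-2} + (1-p)\, h_{k-1}^{(R)}$ I can solve $(1-p)\, h_{k-1}^{(R)} = h_{k-1} - p\, h_{k-2}$, and substituting into $h_k = p\, h_{k-1} + (1-p)\, h_k^{(R)}$ yields $h_k = h_{k-1} - p(1-p)(1 - p/2)\, h_{k-2}$, matching the recurrence~\eqref{eqn:improved-OCS-recurrence} for $g_k$. The base cases $h_0 = h_1 = 1$ hold trivially as there are no arcs to consider. The main obstacle I anticipate is the initial graph-theoretic reduction---carefully verifying that under regular-case distinctness no v-structure among sequence nodes can arise, even via a sequence node acting as the sender---after which the recursion is a routine Markovian conditioning calculation.
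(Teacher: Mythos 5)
Your proof is correct and follows essentially the same strategy as the paper: reduce to per-sequence arcs of the form $(j^\ell_s,j^\ell_{s+1})_i$ using the regular-case distinctness, then set up a two-term linear recurrence via an auxiliary quantity and eliminate it. The only cosmetic difference is that you track the conditional probability $h_k^{(R)} = \Pr[G_k \mid \tau_{j_k}=\receiver]$ and condition on $\tau_{j_{k-1}}$, whereas the paper tracks the joint probability $a_k = \Pr[G_k \cap \{\tau_{j_k}=\receiver\}]$ and conditions on the type of $\hat{j}_k$ and the tie-break $m_{j_k}$; since $a_k = (1-p)h_k^{(R)}$, the two systems are equivalent and yield the same recurrence $g_k = g_{k-1} - p(1-p)(1-\tfrac{p}{2})g_{k-2}$.
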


\begin{proof}
    %We will upper bound the probability of $F$ by the probability that none of the arcs of the form $(j^\ell_s, j^\ell_{s+1})_i$ is realized.
    We start by formalizing the aforementioned implications of the assumption that all $j^\ell_s$'s and $\hat{j}^\ell_s$'s are distinct.
    First, two nodes in the sequences are connected \emph{in a form precluded by event $G$} if and only if they are consecutive nodes in the same sequence, e.g., $j^\ell_{s-1}$ and $j^\ell_s$, and arc $(j^\ell_{s-1}, j^\ell_s)_i$ is realized.
    A node $j^\ell_s$ with $s > 1$ cannot be the receiver of a sender other than $j^\ell_{s-1}$ in the sequences because $\hat{j}^\ell_s$'s are not in the sequences by the assumption. 
    Two nodes $j^\ell_s$ and $j^{\ell'}_{s'}$ with $s, s' > 1$ cannot be the two receivers of the same sender in a v-structure because $\hat{j}^\ell_s$ and $\hat{j}^{\ell'}_{s'}$ are distinct by the assumption.
    Second, the realization of these arcs in different sequences are independent. 
    The realization of arcs of the form $(j^\ell_{s-1}, j^\ell_s)_i$, for any fixed $\ell$, depends only on the realization of $\tau_j$'s for nodes $j^\ell_s$'s and $\hat{j}^\ell_s$'s, and the realization of $m_j$'s for nodes $j^\ell_s$'s, i.e., nodes with the same superscript $\ell$ (and because of the assumption in the lemma that all $j^\ell_s$'s and $\hat{j}^\ell_s$'s are distinct).

    Next, we focus on a fixed sequence $\ell$ and analyze the probability that no arc of the form $(j^\ell_s, j^\ell_{s+1})_i$, $1 \le s < k_\ell$, is realized.
    For simplicity of notations, we will omit the superscripts and subscripts $\ell$ and write $j_1 < j_2 < \dots < j_k$ and $\hat{j}_2, \hat{j}_3, \dots, \hat{j}_k$.
    Let $G_k$ denote this event, and $g_k$ be its probability.
    Trivially, we have $g_0 = g_1 = 1$.
    It remains to show that $g_k$ follows the recurrence in Eqn.~\eqref{eqn:improved-OCS-recurrence}.
    
    We will do so by further considering an auxiliary subevent $A_k$, which requires not only $G_k$ to happen, but also $j_k$ to be a receiver.
    Let $a_k$ denote the probability of this event for any $k \ge 1$.

    If $j_k$ is a sender, which happens with probability $p$, arc $(j_{k-1}, j_k)_i$ would not be realized regardless of the randomness of the other nodes.
    Therefore, conditioned on $j_k$ being a sender, event $G_k$ reduces to event $G_{k-1}$. 

    If $j_k$ is a receiver, but $\hat{j}_k$ is a sender, and would win the potential tie-breaking, which happens with probability $\frac{p(1-p)}{2}$, we still have that arc $(j_{k-1}, j_k)_i$ would not be realized regardless of the randomness of the other nodes.
    Therefore, conditioned on being in this case, events $G_k$ and $A_k$ reduce to event $G_{k-1}$. 

    Otherwise, which happens with probability $(1 - p)(1 - \frac{p}{2})$, $j_{k-1}$ must not be a sender, or else arc $(j_{k-1}, j_k)_i$ would be realized.
    Therefore, conditioned on being in this case, events $G_k$ and $A_k$ reduces to event $A_{k-1}$. 

    Putting together, we have:
    \begin{align*}
        g_k 
        & 
        = \big( p + \tfrac{p(1-p)}{2} \big) g_{k-1} + \big( 1-p \big) \big(1 - \tfrac{p}{2} \big) a_{k-1} 
        ~; \\[1ex]
        a_k
        &
        = \tfrac{p(1-p)}{2} g_{k-1} + \big( 1-p \big) \big(1 - \tfrac{p}{2} \big) a_{k-1} 
        ~.
    \end{align*}

    Eliminating $a_k$'s by combining the two equations, we get that:
    \[
        g_k = g_{k-1} - p \big( 1-p \big) \big( 1 - \tfrac{p}{2} \big) \cdot g_{k-2} 
        ~.
    \]

    This is indeed the recurrence in Eqn.~\eqref{eqn:improved-OCS-recurrence}.
\end{proof}

\begin{lemma}
    \label{lem:improved-OCS-irregular}
    The probability of event $G$ is maximized in the regular case.
\end{lemma}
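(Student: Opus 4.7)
The plan is to transform any non-regular instance into the regular case via a sequence of local modifications, each monotonically non-decreasing in $\Pr[G]$, at which point Lemma~\ref{lem:improved-OCS-regular} yields the bound $\prod_\ell g_{k_\ell}$. The non-regularities come in three flavors: (i) some $\hat{j}^\ell_s$ does not exist (the other candidate in round $j^\ell_s$ is appearing for the first time); (ii) two pointers coincide, i.e., $\hat{j}^\ell_s = \hat{j}^{\ell'}_{s'}$; or (iii) some $\hat{j}^\ell_s$ equals a sequence node $j^{\ell'}_{s'}$. I will show that in each case, substituting a fresh independent dummy node for the offending $\hat{j}^\ell_s$ does not decrease $\Pr[G]$, and then iterate.

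For case (i), a direct calculation shows that the probability of the arc $(j^\ell_{s-1}, j^\ell_s)_i$ being realized drops from $p(1-p)$ to $p(1-p)(1-p/2)$, because the dummy may claim $j^\ell_s$'s incoming slot. Under the natural coupling that reuses all existing randomness and draws the dummy's bits independently, realization of this bad arc in the modified instance implies its realization in the original, while no new bad events are introduced (the dummy lies outside all sequences). Hence event $G$ in the original implies $G$ in the modified instance under the coupling, so $\Pr[G]$ does not decrease. Case (iii) is handled analogously: the substitution both eliminates the cross-sequence arc $(j^{\ell'}_{s'}, j^\ell_s)$ in the \emph{ex-ante} graph (a direct $G$-violator, as $s > 1$) and only weakens the within-sequence bad arc in the same manner as case (i).

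The main obstacle is case (ii). Setting $n := \hat{j}^\ell_s = \hat{j}^{\ell'}_{s'}$, splitting $n$ into two independent dummies simultaneously (a)~eliminates the possibility of a v-structure at $n$ with $j^\ell_s$ and $j^{\ell'}_{s'}$ as the two receivers (itself a bad event for $G$), and (b)~removes the positive correlation induced by sharing $\tau_n$ between the two within-sequence bad-arc events. The subtlety is that the positive correlation on its own tends to \emph{increase} the probability of avoiding both in-sequence bad arcs in the non-regular instance, so a naive bit-by-bit coupling gives the wrong direction. My plan is to exploit the implication that a v-structure at $n$ forces both $j^\ell_s$ and $j^{\ell'}_{s'}$ to receive from $n$ rather than from their respective predecessors, giving the identity that $\Pr[G^{\text{non-reg}}]$ equals the probability of avoiding both in-sequence bad arcs minus the probability of the v-structure at $n$. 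Comparing with $\Pr[G^{\text{reg}}]$, which factors as a product over the two affected sequences by independence after the split, the target inequality reduces to showing that the v-structure probability dominates the covariance between the two bad-arc events in the non-regular instance. I will verify this by conditioning on $\tau_n$ and enumerating the finitely many remaining configurations of node types and tiebreaks. Iterating the three modifications until none apply then reduces the instance to the regular case and completes the proof.
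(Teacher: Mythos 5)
Your approach is genuinely different from the paper's. The paper first conditions on the types $\tau_{j^\ell_s}$ of all \emph{sequence} nodes, which fixes which pairs $(j^\ell_{s-1}, j^\ell_s)$ are ``alive'' (sender followed by receiver). It then introduces, for each alive pair, an indicator $r^\ell_s$ that equals $1$ exactly when the competing arc from $\hat{j}^\ell_s$ both steals $j^\ell_s$'s receive slot and does not itself create a precluded structure. After this conditioning, $G$ is monotone increasing in the $r^\ell_s$'s, and the proof reduces to a clean stochastic-dominance comparison: in the regular case the $r^\ell_s$'s are i.i.d.\ Bernoulli$(p/2)$, whereas in each irregular type they are either deterministically $0$, or (Type~$3$) have a jointly sub-dominated distribution. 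Your plan instead morphs the irregular instance into a regular one by iterated local substitutions of fresh dummy nodes for the offending $\hat{j}^\ell_s$'s, coupling all other random bits, and arguing $\Pr[G]$ never decreases. Both routes are viable; the paper's conditioning disentangles the randomness once and for all and yields a short case analysis of marginal/joint distributions, while yours is more hands-on and requires more bookkeeping about the instance structure.

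Your case~(i) is correct, and your case~(iii) conclusion is also right, though the phrase ``weakens the within-sequence bad arc in the same manner as case~(i)'' is not quite what happens: in case~(iii) the within-sequence arc was already competed for by $\hat{j}^\ell_s=j^{\ell'}_{s'}$, so the realization probability is already $p(1-p)(1-p/2)$. The real point, which the paper captures via $r^\ell_s=0$, is that avoiding $(j^\ell_{s-1},j^\ell_s)$ by stealing through $j^{\ell'}_{s'}$ simply realizes the arc $(j^{\ell'}_{s'},j^\ell_s)$, itself forbidden since $s>1$ --- so the competition is useless in the original and only becomes useful after the dummy is substituted. Your case~(ii) is the one that remains a plan rather than a proof; you correctly identify that positive correlation through the shared $\tau_n$ pushes in the wrong direction and that the v-structure probability must compensate, and the enumeration you propose does in fact close the gap (when both bad-arc pairs are alive, the original has conditional probability $0$ versus $(p/2)^2$ for the split, and when only the v-structure is in play the comparison is $1-p$ versus $1$), but none of this is carried out. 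You also leave two things unaddressed: first, how exactly to define the dummy insertion so that it yields a well-formed input sequence (the dummy's second candidate must be a vertex that never recurs, and the insertion point must not disturb the chosen sequences for $i$) and so that the coupling of the remaining randomness is literally the identity; second, the possibility of compound irregularities (e.g., $\hat{j}^\ell_s$ being a sequence node \emph{and} coinciding with another $\hat{j}^{\ell'}_{s'}$), which your case split treats as exclusive. These are fixable, but as written the proof is incomplete on exactly the point where the subtlety lies.
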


\begin{proof}
    Here are the possible violations of the conditions of the regular case: 
    \begin{enumerate}
        \item Some arc $(\hat{j}^\ell_s, j^\ell_s)$ may not exist.
        \item There may be $\ell, \ell', s, s'$ such that $\hat{j}^\ell_s = j^{\ell'}_{s'}$, i.e., the candidate other than $i$ in round $j^\ell_s$ is also a candidate in round $j^{\ell'}_{s'}$, and in no other rounds in between.
        \item  There may be $\ell, \ell', s, s'$ such that $\hat{j}^\ell_s = \hat{j}^{\ell'}_{s'}$, in which case $j^\ell_s$ and $j^{\ell'}_{s'}$ may belong to the same connected component due to a v-structure.
    \end{enumerate}

    We will refer to them as irregular cases of type $1$, $2$, and $3$ respectively.
    
    To compare the probability of event $G$ in a general, potentially irregular case with its probability in the regular case, we consider the realization of the relevant random bits in two steps.
    First, realize the types $\tau_j$ of the nodes in the sequences, i.e., $j^\ell_s$, $1 \le \ell \le m$, $1 \le s \le k_\ell$.
    Then, realize the types of the other nodes as well as the tie-breaking bits $m_j$ for the nodes in the sequences.
    We will show that conditioned on any realization of the random bits in the first step, the probability of $G$ is maximized in the regular case, over the randomness in the second step alone.

    After realizing the random bits in the first step, we may identify potential pairs of consecutive nodes in a sequence, e.g., $j^\ell_{s-1}$ and $j^\ell_s$, such that $j^\ell_{s-1}$ is a sender while $j^\ell_s$ is a receiver. 
    For each pair of such nodes, we would have an arc between them in the \emph{ex-post} dependence graph which stops event $G$ from happening, unless the randomness in the second step is such that $\hat{j}^\ell_s$ is a sender and further wins the tie-breaking, i.e., arc $(\hat{j}^\ell_s, j^\ell_s)$ is realized in the \emph{ex-post} dependence graph.

    However, having arc $(\hat{j}^\ell_s, j^\ell_s)$ realized in the \emph{ex-post} dependence graph does not necessarily help event $G$ happen, because the out-arcs of node $\hat{j}^\ell_s$ may lead to a connected component that is precluded by event $G$.

    To this end, define an indicator $r^\ell_s$ for any $1 \le \ell \le m$ and any $1 < s \le k_\ell$ such that such that $j^\ell_{s-1}$ is a sender while $j^\ell_s$ is a receiver. 
    Let $r^\ell_s$ equal $1$ if the type of node $\hat{j}^\ell_{s}$ and the tie-breaking bit of node $j^\ell_{s}$ are such that (1) arc $(\hat{j}^\ell_s, j^\ell_s)$ is realized in the \emph{ex-post} dependence graph, and (2) the realized out-arcs of node $\hat{j}^\ell_s$ do not lead to a connected component precluded by event $G$.
    Then, event $G$ can be rewritten as:
    \[
        G = \bigg\{ (\tau, m) : \textrm{there is no $\ell$ and $s$ such that $j^\ell_{s-1}$ is a sender, $j^\ell_s$ is a receiver, and $r^\ell_s = 0$} \bigg\} ~.
    \]

    In particular, event $G$ is monotone in the realization of $r^\ell_s$'s.
    It suffices to show that the joint distribution of $r^\ell_s$'s in the regular case stochastically dominates its counterpart in the irregular cases.
    \begin{itemize}
        \item \emph{Regular case:~} 
            Node $\hat{j}^\ell_s$ is distinct from the other $\hat{j}^{\ell'}_{s'}$'s, and from the nodes $j^{\ell'}_{s'}$ in the sequences.
            Then, $r^\ell_s = 1$ if and only if node $\hat{j}^\ell_s$ is a sender and the tie-breaking bit of node $j^\ell_s$ is in favor of  $\hat{j}^\ell_s$. 
            The former happens with probability $p$ while the latter happens with probability $\frac{1}{2}$.
            Hence, we have $r^\ell_s = 1$ with probabilty $\frac{p}{2}$.
        \item \emph{Type-$1$ irregular case:~} 
            Node $\hat{j}^\ell_s$ does not exists.
            Then, $r^\ell_s = 0$.
        \item \emph{Type-$2$ irregular case:~} 
            Node $\hat{j}^\ell_s$ is a node in the sequences. 
            Then, we still have $r^\ell_s = 0$ because the whenever arc $(\hat{j}^\ell_s, j^\ell_s)$ is realized, the arc itself is precluded by event $G$.
        \item \emph{Type-$3$ irregular case:~} 
            Node $\hat{j}^\ell_s = \hat{j}^{\ell'}_{s'}$ for some $(\ell', s') \ne (\ell, s)$.
            If $(\ell', s')$ itself does not satisfy that $j^{\ell'}_{s'-1}$ is a sender and $j^{\ell'}_{s'}$ is a receiver, it is identical to the regular case, i.e., $r^\ell_s = 1$ with probability $\frac{p}{2}$.
            
            Otherwise, we shall define $r^\ell_s$ and $r^{\ell'}_{s'}$ jointly.
            If node $\hat{j}^\ell_s = \hat{j}^{\ell'}_{s'}$ is a receiver, $r^\ell_s = r^{\ell'}_{s'} = 0$.
            If it is a sender, but loses the tie-breaking in both nodes $j^\ell_s$ and $j^{\ell'}_{s'}$, $r^\ell_s = r^{\ell'}_{s'} = 0$.
            If it is a sender, and wins the tie-breaking in both nodes $j^\ell_s$ and $j^{\ell'}_{s'}$, we still have $r^\ell_s = r^{\ell'}_{s'} = 0$ because it is a v-structure precluded by event $G$.
            Finally, if it is a sender and wins the tie-breaking in exactly one of nodes, say, $j^\ell_s$, let $r^\ell_s = 1$ and $r^{\ell'}_{s'}$, and vice versa.
            In sum:
            \[
                \big( r^\ell_s, r^{\ell'}_{s'} \big) = \begin{cases}
                    (0, 0) & \textrm{with probability $1 - \frac{1}{2} p$;} \\
                    (0, 1) & \textrm{with probabiltty $\frac{1}{4} p$;} \\
                    (1, 0) & \textrm{with probabiltty $\frac{1}{4} p$.}
                \end{cases}
            \]
            Importantly, it is stochastically dominated by the joint distribution below when both $r^\ell_s$ and $r^{\ell'}_{s'}$ were regular cases:
            \[
                \big( r^\ell_s, r^{\ell'}_{s'} \big) = \begin{cases}
                    (0, 0) & \textrm{with probability $(1-\frac{p}{2})^2$;} \\
                    (0, 1) & \textrm{with probabiltty $\frac{p}{2}(1-\frac{p}{2})$;} \\
                    (1, 0) & \textrm{with probabiltty $\frac{p}{2}(1-\frac{p}{2})$;} \\
                    (1, 1) & \textrm{with probability $\frac{p^2}{4}$.}
                \end{cases}
            \]
    \end{itemize}
    
    In sum, all irregular cases are stochastically dominated by the regular case.
\end{proof}

Finally, combining Eqn.~\eqref{eqn:improved-OCS-relaxed-event}, Lemma~\ref{lem:improved-OCS-regular}, and Lemma~\ref{lem:improved-OCS-irregular} gives Eqn.~\eqref{eqn:improved-OCS-cc-bound}.
This finishes the proof of Lemma~\ref{lem:improved-OCS}.

\subsection{Application: Proof of Theorem~\ref{thm:main}}
\label{sec:improved-OCS-LP-solution}

We solve the finite LP in Eqn.~\eqref{eqn:ratio-lp} with $\gamma = \frac{1}{3\sqrt{3}}$ as stated in Lemma~\ref{lem:improved-OCS-gamma}, and with $\kmax = 7$ and $\kappa = \frac{3}{2}$.
The competitive ratio is $\Gamma \approx 0.51461$.
See Table~\ref{tab:lp-solution-improved} for an approximately optimal solution.

\begin{table}[t]
    \centering
    \renewcommand{\arraystretch}{1.2}
    
    \begin{tabular}{c|cc}
        \hline
        $k$ & $a(k)$        & $b(k)$        \\
        \hline
        $0$ & $0.24269440$  & $0.25730560$  \\
        $1$ & $0.16215413$  & $0.13595839$  \\
        $2$ & $0.06548904$  & $0.05488133$  \\
        $3$ & $0.02646573$  & $0.02213681$  \\
        $4$ & $0.01072054$  & $0.00890394$  \\
        $5$ & $0.00438021$  & $0.00354367$  \\
        $6$ & $0.00184589$  & $0.00135357$  \\
        $7$ & $0.00086124$  & $0.00043062$  \\
        \hline
    \end{tabular}
    \caption{An approximately optimal solution to the finite LP with $\kappa = \frac{3}{2}$, $\kmax = 7$, and $\gamma = \frac{1}{3\sqrt{3}}$, rounded to the $8$-th digit after the decimal point, whose $\Gamma \approx 0.51461$.}
    \label{tab:lp-solution-improved}
\end{table}

\section{Conclusions}
\label{sec:conclusions}

This article continues the work of \citet{HuangT/arXiv/2019}, and presents a simplified version of the edge-weighted online bipartite matching algorithm by \citet{Zadimoghaddam/arXiv/2017}, under the online primal dual framework.
Together with \citet{HuangT/arXiv/2019}, we initiate the study of the OCS, an algorithmic ingredient that may find further applications in other online algorithm problems. 

To this end, the independent rounding is a $0$-OCS.
\citet{HuangT/arXiv/2019} gives a $\frac{1}{16}$-OCS, distilled from the algorithm by \citet{Zadimoghaddam/arXiv/2017};
they further show that $1$-OCS is impossible.
This article introduces an improved $\frac{1}{3\sqrt{3}}$-OCS.
Designing a $\gamma$-OCS with the largest possible $0 < \gamma < 1$ is an interesting open problem on its own, and may lead to a better competitive ratio for the edge-weighted online bipartite matching problem.

On the other hand, \citet{HuangT/arXiv/2019} show that even with an imaginary $1$-OCS, the competitive ratio from the current approach is at best $\frac{5}{9}$.
Therefore, we need fundamentally new ideas in order to come closer to the optimal $1 - \frac{1}{e}$ ratio in the unweighted and vertex-weighted cases.
One potential approach is to consider an OCS that allows more than two candidates in each round, which we will refer to as a multiway OCS.
We will leave this as another interesting open problem.

\bibliographystyle{plainnat}
\bibliography{matching.bib}

\end{document}